\documentclass[a4paper,UKenglish,cleveref, autoref, thm-restate]{lipics-v2019}

\bibliographystyle{plainurl}%

\usepackage{amsmath, nccmath}
\usepackage{amsthm}
\usepackage{amsfonts}
\usepackage{bussproofs}
\usepackage{stmaryrd} %
\usepackage{svg}

\usepackage{xspace}
\newcommand{\hfl}[0]{HFL\texorpdfstring{$_\mathbb{N}$}{N}\xspace}
\newcommand{\buchi}[0]{B\"{u}chi\xspace}

\newif\ifdraft\draftfalse
\newif\iffull\fulltrue

\ifdraft
\newcommand\nk[1]{\textcolor{red}{[#1 -nk]}}

\newcommand\tk[1]{\textcolor{blue}{[#1 -tsukada]}}

\else
\newcommand\nk[1]{}

\newcommand\tk[1]{}

\fi
\title{A Cyclic Proof System for \hfl} %

\titlerunning{} %

\author{Mayuko Kori}{Department of Informatics, The Graduate University for Advanced Studies (SOKENDAI), Hayama, Japan \and National Institute of Informatics, Tokyo, Japan}{mkori@nii.ac.jp}{https://orcid.org/0000-0002-8495-5925}{}%

\author{Takeshi Tsukada}{Graduate School of Science, Chiba University, Chiba, Japan}{tsukada@math.s.chiba-u.ac.jp}{https://orcid.org/0000-0002-2824-8708}{}

\author{Naoki Kobayashi}{The University of Tokyo, Tokyo, Japan}{koba@is.s.u-tokyo.ac.jp}{https://orcid.org/0000-0002-0537-0604}{}

\authorrunning{M. Kori, T. Tsukada and N. Kobayashi} %

\Copyright{Mayuko Kori, Takeshi Tsukada and Naoki Kobayashi} %

\ccsdesc[100]{Theory of computation~Proof theory} %

\keywords{Cyclic proof, higher-order logic, fixed-point logic, sequent calculus} %

\category{} %

\relatedversion{} %

\supplement{}%

\nolinenumbers

\acknowledgements{We would like to thank anonymous referees for useful comments.
  This work was supported by JSPS KAKENHI Grant Number JP15H05706, JP20H00577, and JP20H05703.
  We also acknowledge Dominik Wehr for pointing out a gap in Section~\ref{sec: cyclic}.
  The first author is supported by ERATO HASUO Metamathematics for Systems Design Project (No. JPMJER1603), JST.}%


\begin{document}

\maketitle

\begin{abstract}
A cyclic proof system allows us to perform inductive
reasoning without explicit inductions.
We propose a cyclic proof system for \hfl, which is a higher-order predicate logic with natural numbers and alternating fixed-points.
Ours is the first cyclic proof system for a higher-order logic, to our knowledge.
Due to the presence of higher-order predicates and alternating fixed-points,
our cyclic proof system requires a more delicate global condition on cyclic proofs
than the original system of Brotherston and Simpson.
We prove the decidability of checking the global condition and soundness of this system,
and also prove a restricted form of standard completeness
for an infinitary variant of our cyclic proof system.
A potential application of our cyclic proof system is
semi-automated verification of higher-order programs, based on Kobayashi et al.'s recent
work on reductions from program verification to \hfl{} validity checking.
\end{abstract}

\section{Introduction}
There have recently been extensive studies on cyclic proof systems.
They allow a proof to be cyclic, as long as it
satisfies a certain sanity condition called
the ``global trace condition.''
Cyclic proofs enable inductive reasoning
without explicit inductions, which would be useful for proof automation.
Various cyclic proof systems have been proposed~\cite{brotherston, doumane,sprenger} for first-order logics, and some of them have been applied to automated program verification~\cite{brotherston11, brotherston12, tellez20}.

In the present paper, we propose a cyclic proof system for \hfl{},
a higher-order logic with natural numbers and least/greatest fixpoint
operators on higher-order predicates.
\hfl{} has been introduced by Kobayashi et al.~\cite{kobayashi19, kobayashi18}
as an extension of HFL~\cite{viswanathan04},
and shown to be useful for higher-order program verification.
Verification of various temporal properties of higher-order programs
can naturally be reduced to validity checking for \hfl{} formulas.
For example, consider the following OCaml program:
\begin{verbatim}
let rec g n = if n=0 then () else g (n-1) in
let rec f h m = h m; f h (m+1) in f g 0
\end{verbatim}
The property ``\texttt{f} is infinitely often called'' is then expressed as the following
formula:
\[
\big(\nu F.\lambda h.\lambda m. h\;m \land F\;h\;(m+1)\big)\; \big(\mu G.\lambda n.(n=0\lor (n\neq 0\land G\; (n-1)))\big)\;0.
\]
Here, \(\nu F.\lambda h.\cdots\) and \(\mu G.\lambda n.\cdots\)
respectively represent the greatest and least predicates such that
\(F=\lambda h.\cdots\) and \(G=\lambda n.\cdots\). Notice the close correspondence between
the program and the formula: the functions \(f\) and \(g\) correspond to the predicates
\(F\) and \(G\), and function calls correspond to applications of the predicates;
an interested reader may wish to consult \cite{kobayashi19, kobayashi18} to learn
how program verification problems can be translated to \hfl{} formulas.
Our cyclic proof system for \hfl{} presented in this paper would, therefore, be useful for
semi-automated verification of higher-order programs.

A key issue in the design of our cyclic proof system is how to formalize
a decidable ``global trace condition,''  to guarantee the soundness of cyclic proofs
in the presence of higher-order predicates and alternating fixed-points.
Our global trace condition has been inspired by, and is actually very similar to that of Doumane~\cite{doumane}.
The decidability of our global trace condition is,
however, non-trivial, due to the presence of
higher-order predicates. {Inspired by the approach of Brotherston and Simpson~\cite{brotherston},
we reduce the global trace condition to the containment problem for B\"uchi automata.}

We also consider an infinitary version of our proof system, and prove that
the infinitary proof system is complete for sequents without higher-order variables.
The restriction to sequents without higher-order variables is sufficient for
the aforementioned application of our proof system to higher-order program verification.

The rest of this paper is structured as follows.
Section 2 reviews the syntax and semantics of \hfl{}.
Section 3 defines our cyclic proof system.
Sections 4 and 5 respectively prove the decidability of the global trace condition
and the soundness of our cyclic proof system.
Section 6 discusses the restricted form of completeness
of the infinitary variant of our proof system.
Section 7 discusses related work, and Section~8 concludes the paper.

\section{\hfl: Higher-Order Fixed-Point Arithmetic}
This section introduces the target logic \hfl, which is a higher-order logic with natural numbers and alternating fixed-points.
It has been introduced by Kobayashi et al.~\cite{kobayashi18,watanabe19} as an extension of higher-order modal {fixed-point} logic (HFL)~\cite{viswanathan04}.

\subsection{Syntax of \hfl} \label{sec: hfl}
\hfl is simply typed.
The syntax of \emph{types} is given as follows:
\begin{equation*}
  A ::= \mathbf{N} \mid T
  \qquad\qquad
  T ::= \mathbf{\Omega} \mid A \to T.
\end{equation*}
\( \mathbf{N} \) is the type of natural numbers, \( \mathbf{\Omega} \) is the type of propositions and \( A \to T \) is a function type.
Occurrences of \( \mathbf{N} \) are restricted to argument positions for a technical reason (see below).

Let $\mathcal{V}$ be a countably infinite set of \emph{variables}, ranged over by \( x, y, z, f, X, Y, Z, \dots \).
The syntax of \emph{terms} and \emph{formulas} is given by:
\begin{align*}
&\text{term} & s, t &\quad::=\quad x \mid \mathbf{Z} \mid \mathbf{S} t \\
&\text{formula} & \varphi, \psi &\quad::=\quad
s = t \mid \varphi \lor \psi \mid \varphi \land \psi %
\mid x \mid \lambda x^A. \varphi \mid \varphi \, \psi \mid \varphi \, t
\mid \mu x^T. \varphi \mid \nu x^T. \varphi.
\end{align*}
We shall often omit the type annotations.
The syntax of terms is standard: \( \mathbf{Z} \) represents zero and \( \mathbf{S} \) is the successor function.
A closed term must be of the form \( \mathbf{S}^n \mathbf{Z} \), which is often identified with the natural number \( n \).
Constructors of formulas are logical ones (\( s = t \), \( \varphi \lor \psi \) and \( \varphi \land \psi \)), those from the \( \lambda \)-calculus (variable \( x \), abstraction \( \lambda x. \varphi \), and application \( \varphi\,\psi \) and \( \varphi\,t \)), and fixed-point operators (least fixed-point \( \mu x. \varphi \) and greatest fixed-point \( \nu x. \varphi \)).
Some standard constructs such as summation \( t_1 + t_2 = s \), multiplication \( t_1 \times t_2 = s \), truth \( \top \) and quantifiers \( \forall \) and \( \exists \) are definable{; see examples later in this subsection}.
The set of \emph{free variables} is defined as usual; the binders are \( \lambda x \), \( \mu x \) and \( \nu x \).

A \emph{sequent} is a pair \( (\Gamma, \Delta) \) of finite sequences of formulas, written as \( \Gamma \vdash \Delta \).
{
For a finite sequence \(\Gamma \) of formulas,
\( \mathit{FV}(\Gamma) \) denotes
the union of the sets of free variables of formulas in \( \Gamma \).}
The free variables of a sequent \( \Gamma \vdash \Delta \) is \( \mathit{FV}(\Gamma) \cup \mathit{FV}(\Delta) \).

The type system is presented in \autoref{fig: typing-rules}.
Here \( \mathcal{H} \) is a \emph{type environment}, which is a map from a finite subset of variables to the set of types.
The typing rules should be easy to understand. %
The types of fixed-point formulas \( \mu x. \varphi \) and \( \nu y. \psi \), as well as the types of the variables \( x \) and \( y \),
are restricted to \(T\) (i.e., they cannot
 be \( \mathbf{N} \)).
A formula \( \varphi \) is \emph{well-typed} if \( \mathcal{H} \vdash \varphi : T \) for some \( \mathcal{H} \) and \( T \).
In the sequel, we shall consider only well-typed formulas.

For a sequent \( \Gamma \vdash \Delta \), we write \( \mathcal{H} \mid \Gamma \vdash \Delta \) if \( \mathcal{H} \vdash \varphi : \mathbf{\Omega} \) for every \( \varphi \) in \( \Gamma \) or \( \Delta \).
A sequent is \emph{well-typed} if \( \mathcal{H} \mid \Gamma \vdash \Delta \) for some \( \mathcal{H} \).

We give some examples of formulas and explain their intuitive meaning.
\begin{example}
  The truth \( \top \) and falsity \( \bot \) can be defined as fixed-points:
  \begin{equation*}
    \top ~:=~ \nu x^{\mathbf{\Omega}}. x
    \qquad\mbox{and}\qquad
    \bot ~:=~ \mu x^{\mathbf{\Omega}}. x.
  \end{equation*}
  The former means that \( \top \) is the greatest element in the set of truth values (i.e.~elements in the semantic domain of \( \mathbf{\Omega} \)) such that \( x = x \).
  Similarly \( \bot \) is the least truth value.
  The greatest \( \top_T \) and least \( \bot_T \) value of type \( T \) are defined in a similar way.
  \qed
\end{example}

\begin{example}
  The quantifiers over natural numbers are definable.
  Let \( \mathit{forall} \) be a predicate of type \( (\mathbf{N} \to \mathbf{\Omega}) \to \mathbf{N} \to \mathbf{\Omega} \) defined by
  \begin{equation*}
    \mathit{forall} \quad:=\quad \nu X. \lambda p^{\mathbf{N} \to \mathbf{\Omega}}. \lambda x^{\mathbf{N}}. p\,x \land X\,p\,(\mathbf{S}\,x).
  \end{equation*}
  Then \( \mathit{forall}\,\varphi\,n \) holds if and only if \( \varphi\,m \) holds for every \( m \ge n \).
  To see this, {notice} that \( (\mathit{forall}\,\varphi\,n) = (\varphi\,n) \land (\mathit{forall}\,\varphi\,(n+1)) \) since \( \mathit{forall} \) is a fixed-point.
  By iteratively applying this equation, we have
  \begin{equation*}
    (\mathit{forall}\,\varphi\,n) \quad=\quad
    (\varphi\,n)
    \land (\varphi\,(n+1))
    \land (\varphi\,(n+2))
    \land \cdots
  \end{equation*}
  and thus \( \mathit{forall}\,\varphi\,n \) if and only if \( \forall m \ge n. \varphi\,m \).\footnote{
    The reader may notice that this intuitive argument does not explain why we should use \( \nu \) instead of \( \mu \).
    Here we skip this subtle issue.
    An interested reader may compare the interpretations of \( \mathit{forall} \) and its \( \mu \)-variant, following the definition in the next subsection.
  }
  Then the universal quantifier over natural numbers is defined by
  \begin{equation*}
    \forall x. \varphi \quad:=\quad \mathit{forall}\,(\lambda x. \varphi)\,\mathbf{Z}.
  \end{equation*}
  The existential quantifier can be defined similarly.
  A direct definition is
  \begin{equation*}
    \exists x^\mathbf{N}. \varphi \quad:=\quad \Big(\mu Y^{\mathbf{N} \to \mathbf{\Omega}}. \lambda x. \varphi \lor Y \, (\mathbf{S}x) \Big) \, \mathbf{Z}.
  \end{equation*}
  The quantifiers over \( T \) are easier because of monotonicity (see next subsection); we define \( \forall x^T. \varphi := (\lambda x. \varphi)\,\bot_T \) and \( \exists x^T. \varphi := (\lambda x. \varphi)\,\top_T \).
  \qed
\end{example}

\begin{example}
  Let \( \mathit{sum} \) be the summation, i.e.~the predicate such that \( \mathit{sum}\,n\,m\,k \) holds if and only if \( n + m = k \).
  This predicate can be defined as follows:
  \begin{equation*}
    \mu \mathit{sum}.\,\lambda x^{\mathbf{N}}. \lambda y^{\mathbf{N}}. \lambda z^{\mathbf{N}}. (x = \mathbf{Z} \land y = z) \vee (\exists x'. \exists z'. x = \mathbf{S} x' \land \mathit{sum}\,x'\,y\,z' \land z = \mathbf{S} z').
  \end{equation*}
  This formula represents the standard inductive definition of the summation: \( \mathbf{Z} + y = y \) and \( x' + y = z' \Longrightarrow \mathbf{S} x' + y = \mathbf{S} z' \).
  One can define the multiplication in a similar way, representing the standard inductive definition of the multiplication using \( \mathit{sum} \).
  \qed
\end{example}

\begin{example}
  The inequality \( s < t \) on terms is also definable.
  The idea is to appeal {to} the following fact: if \( n < m \), then \( n + 1 = m \) or \( (n+1) < m \).
  This justifies
  \begin{equation*}
    (s < t) \quad:=\quad
    \big(\mu X. \lambda y^{\mathbf{N}}. (\mathbf{S} y = t) \lor X\,(\mathbf{S} y))\,s.
  \end{equation*}
  Here \( X \) is a variable of type \( \mathbf{N} \to \mathbf{\Omega} \) and \( X\,s' \) means \( s' < t \).
  Then the inequality \( s \neq t \) can be defined as \( (s < t) \lor (t < s) \).
  \qed
\end{example}

\begin{remark}
  \hfl does not have negation \( \neg \).
  The absence of negation plays an important role in the interpretation of fixed-point operators, as we shall see in the next subsection.
  For a formula \( \varphi \) with no free variables except for those of type \( \mathbf{N} \), the negation \( \neg \varphi \) can be obtained by replacing each logical connective with its De~Morgan dual,
  \begin{equation*}
    {\land} \leftrightsquigarrow {\lor},
    \qquad
    {\mu} \leftrightsquigarrow {\nu},
    \quad\mbox{and}\quad
    {=} \leftrightsquigarrow {\neq},
  \end{equation*}
  as discussed in~\cite{negation}.
  \qed
\end{remark}

\begin{figure}[t]
For terms:
\begin{center}  %
  \AxiomC{}
  \UnaryInfC{$\mathcal{H}, x:A \vdash x:A$}
  \DisplayProof
  \qquad
  \AxiomC{}
  \UnaryInfC{$\mathcal{H} \vdash \mathbf{Z}: \mathbf{N}$}
  \DisplayProof
  \qquad
  \AxiomC{$\mathcal{H} \vdash t: \mathbf{N}$}
  \UnaryInfC{$\mathcal{H} \vdash \mathbf{S}t: \mathbf{N}$}
  \DisplayProof
\end{center}

For formulas:
\begin{center}
  \AxiomC{$\mathcal{H} \vdash s: \mathbf{N}$}
  \AxiomC{$\mathcal{H} \vdash t: \mathbf{N}$}
  \BinaryInfC{$\mathcal{H} \vdash s=t: \mathbf{\Omega}$}
  \DisplayProof
\end{center}

\begin{center}
  \AxiomC{$\mathcal{H} \vdash \varphi: \mathbf{\Omega}$}
  \AxiomC{$\mathcal{H} \vdash \psi: \mathbf{\Omega}$}
  \BinaryInfC{$\mathcal{H} \vdash \varphi \lor \psi: \mathbf{\Omega}$}
  \DisplayProof
  \qquad
  \AxiomC{$\mathcal{H} \vdash \varphi: \mathbf{\Omega}$}
  \AxiomC{$\mathcal{H} \vdash \psi: \mathbf{\Omega}$}
  \BinaryInfC{$\mathcal{H} \vdash \varphi \land \psi: \mathbf{\Omega}$}
  \DisplayProof
\end{center}

\begin{center}
  \AxiomC{$\mathcal{H}, x: A \vdash \varphi: T $}
  \UnaryInfC{$\mathcal{H} \vdash \lambda x^A. \varphi: A \to T$}
  \DisplayProof
  \qquad
  \AxiomC{$\mathcal{H} \vdash \varphi: T' \to T$}
  \AxiomC{$\mathcal{H} \vdash \psi: T'$}
  \BinaryInfC{$\mathcal{H} \vdash \varphi \, \psi: T$}
  \DisplayProof
\end{center}

\begin{center}
  \AxiomC{$\mathcal{H} \vdash \varphi: \mathbf{N} \to T$}
  \AxiomC{$\mathcal{H} \vdash t: \mathbf{N}$}
  \BinaryInfC{$\mathcal{H} \vdash \varphi \, t: T$}
  \DisplayProof
\end{center}

\begin{center}
  \AxiomC{$\mathcal{H}, x:T \vdash \varphi:T$}
  \UnaryInfC{$\mathcal{H} \vdash \mu x^T. \varphi: T$}
  \DisplayProof
  \qquad
  \AxiomC{$\mathcal{H}, x:T \vdash \varphi:T$}
  \UnaryInfC{$\mathcal{H} \vdash \nu x^T. \varphi: T$}
  \DisplayProof
\end{center}
\caption{Typing Rules for \hfl.}
\label{fig: typing-rules}
\end{figure}

\subsection{Semantics of \hfl}
This subsection introduces the interpretations of types and formulas.

The interpretation of a type \( A \) is a poset \( \llbracket A \rrbracket = (\llbracket A \rrbracket, \le_A) \).
It is inductively defined by
\begin{align*}
  \llbracket \mathbf{N} \rrbracket &:= \mathbb{N}
  &
  x \le_{\mathbf{N}} y &:\Leftrightarrow x = y
  \\
  \llbracket \mathbf{\Omega} \rrbracket &:= \{\, \top, \bot \,\}
  &
  x \le_{\mathbf{\Omega}} y &:\Leftrightarrow x=\bot \mbox{ or } y=\top
  \\
  \llbracket A \to T \rrbracket &:= \{ f: \llbracket A \rrbracket \to \llbracket T \rrbracket \mid f \text{ is monotone} \}
  \quad
  &
  f \le_{A \to T} g &:\Leftrightarrow \forall x \in \llbracket A \rrbracket. f(x) \le_T g(x).
\end{align*}
Note that
\begin{itemize}
\item \( \llbracket A \to T \rrbracket \) is the space of \emph{monotone} functions, and
\item \( \llbracket T \rrbracket \) is a complete lattice for every \( T \).
\end{itemize}
On the contrary \( \llbracket A \rrbracket \) is not necessarily a complete lattice since \( \llbracket \mathbf{N} \rrbracket \) is not.

The interpretation \( \llbracket \mathcal{H} \rrbracket \) of a type environment \( \mathcal{H} \) is the set of mappings \( \rho \) such that \( \rho(x) \in \llbracket \mathcal{H}(x) \rrbracket \) for every \( x \) in the domain of \( \mathcal{H} \).
{The set \( \llbracket \mathcal{H} \rrbracket \) forms a poset with respect to}
the point-wise ordering.
An element of $\llbracket \mathcal{H} \rrbracket$ is called a \emph{valuation}.
We write \( \rho[x \mapsto v] \) for the mapping defined by \( \rho[x \mapsto v](x) = v \) and \( \rho[x \mapsto v](y) = \rho(y) \) for \( y \neq x \).

Assume \( \mathcal{H} \vdash \varphi : A \) (where \( \varphi \) is a formula or a term).
Its interpretation \( \llbracket \mathcal{H} \vdash \varphi : A \rrbracket \) is a monotone function \( \llbracket \mathcal{H} \rrbracket \longrightarrow \llbracket A \rrbracket \).
The definition is in \autoref{fig: definition-of-interpretation}.
Here \( \mathit{lfp}(f) \) and \( \mathit{gfp}(f) \) are the least and greatest fixed-points of the function \( f \).
The interpretations of the fixed-point operators are well-defined since every monotone function \( f : P \longrightarrow P \) on a complete lattice \( P \) has both the least and greatest fixed-points.
We shall write $\llbracket \mathcal{H} \vdash \varphi: A \rrbracket(\rho)$ simply $\llbracket \varphi \rrbracket_\rho$ when no confusion can arise.

\begin{figure}[t]
  \begin{align*}
    \llbracket \mathcal{H} \vdash x: A \rrbracket(\rho) &= \rho(x) \\
  \llbracket \mathcal{H} \vdash \mathbf{Z}: \mathbf{N} \rrbracket(\rho) &= 0 \\
  \llbracket \mathcal{H} \vdash \mathbf{S}t: \mathbf{N} \rrbracket(\rho) &= 1+\llbracket \mathcal{H} \vdash t: \mathbf{N} \rrbracket(\rho) \\
  \llbracket \mathcal{H} \vdash s = t: \mathbf{\Omega} \rrbracket(\rho) &= (\llbracket \mathcal{H} \vdash s: \mathbf{N} \rrbracket(\rho) = \llbracket \mathcal{H} \vdash t: \mathbf{N} \rrbracket (\rho)) \\
  \llbracket \mathcal{H} \vdash \varphi \lor \psi : \mathbf{\Omega}\rrbracket(\rho) &= (\llbracket \mathcal{H} \vdash \varphi: \mathbf{\Omega} \rrbracket(\rho) \lor \llbracket \mathcal{H} \vdash \psi: \mathbf{\Omega} \rrbracket(\rho)) \\
  \llbracket \mathcal{H} \vdash \varphi \land \psi: \mathbf{\Omega} \rrbracket(\rho) &= (\llbracket \mathcal{H} \vdash \varphi: \mathbf{\Omega} \rrbracket(\rho) \land \llbracket \mathcal{H} \vdash \psi: \mathbf{\Omega} \rrbracket(\rho)) \\
  \llbracket \mathcal{H} \vdash \lambda x^A. \varphi: A \to T \rrbracket(\rho) &= \lambda v \in \llbracket A \rrbracket. \llbracket \mathcal{H}, x:A \vdash \varphi: T \rrbracket(\rho[x \mapsto v]) \\
  \llbracket \mathcal{H} \vdash \varphi \, \psi: T \rrbracket(\rho) &= (\llbracket \mathcal{H} \vdash \varphi: A \to T \rrbracket(\rho)) \, (\llbracket \mathcal{H} \vdash \psi: A \rrbracket (\rho)) \\
  \llbracket \mathcal{H} \vdash \varphi \, t: T \rrbracket(\rho) &= (\llbracket \mathcal{H} \vdash \varphi: \mathbf{N} \to T \rrbracket(\rho)) \, (\llbracket \mathcal{H} \vdash t: \mathbf{N} \rrbracket (\rho)) \\
  \llbracket \mathcal{H} \vdash \mu x^T. \varphi: T \rrbracket(\rho) &= \text{lfp}(\llbracket \mathcal{H} \vdash \lambda x^T. \varphi: T \to T \rrbracket(\rho)) \\
  \llbracket \mathcal{H} \vdash \nu x^T. \varphi: T \rrbracket(\rho) &= \text{gfp}(\llbracket \mathcal{H} \vdash \lambda x^T. \varphi: T \to T \rrbracket(\rho))
  \end{align*}
  \caption{Interpretation of terms and formulas.}
  \label{fig: definition-of-interpretation}
\end{figure}

\begin{definition}
  Let $\mathcal{H} \mid \Gamma \vdash \Delta$ be a sequent and $\rho$ be a valuation in $\llbracket \mathcal{H} \rrbracket$.
  Then we write $\Gamma \models_\rho \Delta$
  if $\bigwedge_{\varphi \in \Gamma} \llbracket \varphi \rrbracket_\rho \leq \bigvee_{\psi \in \Delta} \llbracket \psi \rrbracket_\rho$,
  or equivalently, if \( \llbracket \varphi \rrbracket_\rho = \bot \) for some \( \varphi \in \Gamma \) or \( \llbracket \psi \rrbracket_{\rho} = \top \) for some \( \psi \in \Delta \).
  A sequent $\mathcal{H} \mid \Gamma \vdash \Delta$ is \emph{valid}
  if $\Gamma \models_\rho \Delta$ for all valuations $\rho$,
  and we denote it briefly by $\Gamma \models \Delta$.
\end{definition}

\section{A Cyclic Proof System for \hfl} \label{sec: cyclic}
In this section, we introduce a cyclic proof system for \hfl.
A cyclic proof is a proof diagram which can contain cycles and should satisfy a certain condition in order to ensure the soundness.
\autoref{subsec: def} describes derivations and the soundness condition to define cyclic proofs
and \autoref{subsec: rule} shows that the induction rule based on prefixed-points is admissible.
In \autoref{subsec: eg}, we show some examples of cyclic proofs.

\subsection{Definition of the Cyclic Proof System} \label{subsec: def}
\autoref{fig: ded-rules} shows our deduction rules, which are based on Gentzen's sequent calculus.
Here \( \varphi[\psi/x] \) {represents the} capture-avoiding substitution and $\Gamma[\varphi/x]$ represents the sequence of formulas which is the result of applying the
substitution $[\varphi/x]$ to all formulas in $\Gamma$.
The rules should be {easy to understand since most of the} rules are standard.  We explain uncommon rules.
The rule (Mono) is based on the fact
that each formula defines a monotone function, i.e.~\( \llbracket \psi \rrbracket \sqsubseteq \llbracket \chi \rrbracket \) implies \( \llbracket \varphi[\psi/x] \rrbracket \sqsubseteq \llbracket \varphi[\chi/x] \rrbracket \).  Since only a formula of type \( \mathbf{\Omega} \) can appear in a sequent, \( \llbracket \psi \rrbracket \sqsubseteq \llbracket \chi \rrbracket \) is expressed as \( \psi\,\vec{y} \vdash \chi\,\vec{y} \) for fresh \( \vec{y} \).
The rules (\( \lambda L \)) and (\( \lambda R \)) are justified by the fact that the \( \beta \)-equivalence \( (\lambda x. \varphi)\,\psi = \varphi[\psi/x] \) preserves semantics.
The rules ($\sigma L$) and ($\sigma R$) express the fact that \( \sigma x. \varphi \) ({where \( \sigma\)
is} \(\mu \) or \( \nu \)) is a fixed-point and thus \( \sigma x. \varphi = \varphi[\sigma x.\varphi/x] \).
The rule (Nat) says that {a variable of type \( \mathbf{N} \) indeed represents} a natural number, and (P1) and (P2) correspond to the axioms \( (\mathbf{Z} = \mathbf{S}x) \to \bot \) and \( (\mathbf{S}x = \mathbf{S}y) \to (x = y) \). %

\begin{figure}
\begin{itemize}
  \item Identity rules
\begin{center}
  \AxiomC{\mathstrut}
  \RightLabel{(Axiom)}
  \UnaryInfC{$\varphi \vdash \varphi$}
  \DisplayProof
  \qquad
  \AxiomC{$\Gamma \vdash \varphi, \Delta$}
  \AxiomC{$\Gamma, \varphi \vdash \Delta$}
  \RightLabel{(Cut)}
  \BinaryInfC{$\Gamma \vdash \Delta$}
  \DisplayProof
\end{center}

\item Structural rules
\begin{center}
  \AxiomC{$\Gamma \vdash \Delta$}
  \RightLabel{(Wk L)}
  \UnaryInfC{$\Gamma, \varphi \vdash \Delta$}
  \DisplayProof
  \qquad
  \AxiomC{$\Gamma \vdash \Delta$}
  \RightLabel{(Wk R)}
  \UnaryInfC{$\Gamma \vdash \varphi, \Delta$}
  \DisplayProof
\end{center}

\begin{center}
  \AxiomC{$\Gamma, \varphi, \varphi \vdash \Delta$}
  \RightLabel{(Ctr L)}
  \UnaryInfC{$\Gamma, \varphi \vdash \Delta$}
  \DisplayProof
  \qquad
  \AxiomC{$\Gamma \vdash \varphi, \varphi, \Delta$}
  \RightLabel{(Ctr R)}
  \UnaryInfC{$\Gamma \vdash \varphi, \Delta$}
  \DisplayProof
\end{center}

\begin{center}
  \AxiomC{$\Gamma, \psi, \varphi, \Gamma' \vdash \Delta$}
  \RightLabel{(Ex L)}
  \UnaryInfC{$\Gamma, \varphi, \psi, \Gamma' \vdash \Delta$}
  \DisplayProof
  \qquad
  \AxiomC{$\Gamma \vdash \Delta, \psi, \varphi, \Delta'$}
  \RightLabel{(Ex R)}
  \UnaryInfC{$\Gamma \vdash \Delta, \varphi, \psi, \Delta'$}
  \DisplayProof
\end{center}

\begin{center}
  \AxiomC{$\Gamma \vdash \Delta$}
  \RightLabel{(Subst)}
  \UnaryInfC{$\Gamma[\varphi/x] \vdash \Delta[\varphi/x]$}
  \DisplayProof
\end{center}

\begin{center}
  \AxiomC{$\Gamma, \psi \, \vec{y} \vdash \chi \, \vec{y}, \Delta$}
  \RightLabel{$\vec{y} \cap FV(\Gamma, \psi, \chi, \Delta) = \emptyset$, (Mono)}
  \UnaryInfC{$\Gamma, \varphi[\psi/x^T] \vdash \varphi[\chi/x^T], \Delta$}
  \DisplayProof
\end{center}

\item Logical rules ($\sigma = \mu, \nu$)

\begin{center}
  \AxiomC{$\Gamma[t/x, s/y] \vdash \Delta[t/x, s/y]$}
  \RightLabel{($= L$)}
  \UnaryInfC{$\Gamma[s/x, t/y], s = t \vdash \Delta[s/x, t/y]$}
  \DisplayProof
  \qquad
  \AxiomC{}
  \RightLabel{($= R$)}
  \UnaryInfC{$\Gamma \vdash t=t, \Delta$}
  \DisplayProof
\end{center}

\begin{center}
  \AxiomC{$\Gamma, \varphi \vdash \Delta$}
  \AxiomC{$\Gamma, \psi\vdash \Delta$}
  \RightLabel{($\lor L$)}
  \BinaryInfC{$\Gamma, \varphi \lor \psi\vdash \Delta$}
  \DisplayProof
  \qquad
  \AxiomC{$\Gamma \vdash \varphi, \psi, \Delta$}
  \RightLabel{($\lor R$)}
  \UnaryInfC{$\Gamma \vdash \varphi \lor \psi, \Delta$}
  \DisplayProof
\end{center}

\begin{center}
  \AxiomC{$\Gamma, \varphi, \psi\vdash \Delta$}
  \RightLabel{($\land L$)}
  \UnaryInfC{$\Gamma, \varphi \land \psi\vdash \Delta$}
  \DisplayProof
  \qquad
  \AxiomC{$\Gamma \vdash \varphi, \Delta$}
  \AxiomC{$\Gamma \vdash \psi, \Delta$}
  \RightLabel{($\land R$)}
  \BinaryInfC{$\Gamma \vdash \varphi \land \psi, \Delta$}
  \DisplayProof
\end{center}

\begin{center}
  \AxiomC{$\Gamma, \varphi[\psi/x] \, \vec{\psi} \vdash \Delta$}
  \RightLabel{($\lambda L$)}
  \UnaryInfC{$\Gamma, (\lambda x. \varphi) \, \psi \, \vec{\psi} \vdash \Delta$}
  \DisplayProof
  \qquad
  \AxiomC{$\Gamma \vdash \varphi[\psi/x] \, \vec{\psi}, \Delta$}
  \RightLabel{($\lambda R$)}
  \UnaryInfC{$\Gamma \vdash (\lambda x. \varphi) \, \psi \, \vec{\psi}, \Delta$}
  \DisplayProof
\end{center}

\begin{center}
  \AxiomC{$\Gamma, \varphi[\sigma x. \varphi/x] \, \vec{\psi} \vdash \Delta$}
  \RightLabel{($\sigma L$)}
  \UnaryInfC{$\Gamma, (\sigma x. \varphi) \, \vec{\psi} \vdash \Delta$}
  \DisplayProof
  \qquad
  \AxiomC{$\Gamma \vdash \varphi[\sigma x. \varphi/x] \, \vec{\psi}, \Delta$}
  \RightLabel{($\sigma R$)}
  \UnaryInfC{$\Gamma \vdash (\sigma x. \varphi) \, \vec{\psi}, \Delta$}
  \DisplayProof
\end{center}

\item Natural number rules

$N \equiv \mu X. \lambda x. (x = \mathbf{Z}) \lor (\exists x'. x = \mathbf{S}x' \land X \, x')$

\begin{center}
  \AxiomC{$\Gamma, N \, x^{\mathbf{N}} \vdash \Delta$}
  \RightLabel{(Nat)}
  \UnaryInfC{$\Gamma \vdash \Delta$}
  \DisplayProof
  \quad
  \AxiomC{\mathstrut}
  \RightLabel{(P1)}
  \UnaryInfC{$\mathbf{S}s = \mathbf{Z} \vdash$}
  \DisplayProof
  \quad
  \AxiomC{$\Gamma, s=t \vdash \Delta$}
  \RightLabel{(P2)}
  \UnaryInfC{$\Gamma, \mathbf{S}s = \mathbf{S}t \vdash \Delta$}
  \DisplayProof
\end{center}
\end{itemize}
\caption{Deduction Rules.}
\label{fig: ded-rules}
\end{figure}

Although {every leaf of an  ordinary proof tree is an} axiom,
this is not the case in cyclic proof systems.
A \emph{(finite) derivation tree} is a tree obtained by using the rules in \autoref{fig: ded-rules},
{whose} leaves are not necessarily axioms.
A leaf that is not an axiom is {called} \emph{open}.
  \begin{remark} \label{rem:mono}
  Though the rule (Mono) has only one premise in Figure~\ref{fig: ded-rules},
  from now on, we treat it as the following rule with multiple premises:
\begin{center}
  \AxiomC{$\{\Gamma, \psi \, \vec{y} \vdash \chi \, \vec{y}, \Delta\}_{1, \cdots, k}$}
  \RightLabel{$\vec{y} \cap FV(\Gamma, \psi, \chi, \Delta) = \emptyset$, (Mono)}
  \UnaryInfC{$\Gamma, \varphi[\psi/x^T] \vdash \varphi[\chi/x^T], \Delta$}
  \DisplayProof.
\end{center}
Here, $k$ is the number of occurrences of $x$ in $\varphi$,
and the rule above has a copoy of the premise $\Gamma, \psi \vec{y} \vdash \chi \vec{y}, \Delta$ for each occurence of $x$ in $\varphi$.
For example, the inference:
\begin{center}
  \AxiomC{$x \vdash x \lor y$}
  \RightLabel{(Mono)}
  \UnaryInfC{$x \lor \mu z. (x \lor z) \vdash (x \lor y) \lor \mu z. ((x \lor y) \lor z)$}
  \DisplayProof
\end{center}
should actually be interpreted as
\begin{center}
  \AxiomC{$x \vdash x \lor y$}
  \AxiomC{$x \vdash x \lor y$}
  \RightLabel{(Mono)}
  \BinaryInfC{$x \lor \mu z. (x \lor z) \vdash (x \lor y) \lor \mu z. ((x \lor y) \lor z)$}
  \DisplayProof.
\end{center}

This enables us to trace fixed-point operators precisely in a derivation tree. See Definition~\ref{def: trace-ho}.
\end{remark}

\begin{definition}[Pre-proof]
  \label{def: preproof}
  A \emph{pre-proof} consists of a finite derivation tree $\mathcal{D}$ and a function $\mathcal{R}$ that assigns to each open leaf $n$ in $\mathcal{D}$ a non-leaf node $\mathcal{R}(n)$ that has the same sequent as $ n $.
\end{definition}

A pre-proof induces the infinite derivation tree by iteratively replacing an open leaf \( n \) with \( \mathcal{R}(n) \).
This correspondence would be helpful to understand the definitions below.

A pre-proof is unsound {in general, i.e.,}
the root sequent of a pre-proof {may be invalid.} %
We introduce a {sanity} condition called the \emph{global trace condition},
and define cyclic proofs as pre-proofs that satisfy this condition.

Given a pre-proof \( (\mathcal{D}, \mathcal{R}) \), its \emph{path} is a (finite or infinite) sequence \( (n_i)_{i = 1, 2, \dots} \) of nodes of \( \mathcal{D} \) such that, for every \( i \),
\begin{itemize}
  \item if \( n_i \) is an open leaf, then \( n_{i+1} = \mathcal{R}(n_i) \), and
\item otherwise \( n_{i+1} \) is a premise of \( n_i \).
\end{itemize}

Given a path \( (n_i)_{i = 1, 2, \dots} \), {a \emph{pre-trace} in this path} is a sequence \( (\tau_i)_{i = 1, 2, \dots} \) of occurrences of formulas such that, for every \( i \), \( \tau_i \) is an occurrence of a formula
in the sequent \( n_i \) and \( \tau_{i+1} \) is a ``relevant occurrence'' of \( \tau_i \)
{in the sequent \(n_{i+1}\).}
{The latter condition means that $\tau_{i+1}$ originates from \(\tau_i\) in a bottom-up construction of the proof.}
For example, if \( n_i \) and \( n_{i+1} \) are respectively the conclusion and premise of (\(\wedge L\)),
  {i.e., if \(n_i\) is the sequent \(\Gamma, \varphi\land \psi \vdash \Delta\) and
\(n_{i+1}\) is \(\Gamma, \varphi, \psi \vdash \Delta\),
then (i) \(\varphi\) and \(\psi\) in \(n_{i+1}\) are relevant occurrences of \(\varphi\land \psi\)
  in \(n_i\), and (ii) each formula in \(\Gamma\) (resp. \(\Delta\)) of \(n_{i+1}\) is a relevant
  occurrence of the corresponding formula in \(\Gamma\) (resp. \(\Delta\)) of \(n_{i}\).}
  For another example, if \( n_i \) is the conclusion of (\(\vee L\)), \( n_{i+1} \) is the left premise and \( \tau_i \) is the occurrence of \( \varphi \lor \psi \), then \( \tau_{i+1} \) is the occurrence of \( \varphi \).
  The concrete definition, which we omit here, is lengthy but straightforward; a possible exception is (Mono), in which \( \psi\,\vec{y} \) and \( \chi\,\vec{y} \)
{  are defined as relevant occurrences of}
  \( \varphi[\psi/x] \) and \( \varphi[\chi/x] \)  respectively.
  {See Appendix~\ref{ap: rel-occurrence} for more detail.}
  A pre-trace is a \emph{trace} if, for infinitely many \( i \), \( \tau_i \) is the principal occurrence\footnote{An occurrence of a formula in the conclusion of a rule in \autoref{fig: ded-rules} is \emph{principal} if it
    {belongs to neither} \( \Gamma \) nor \( \Delta \).} of a logical rule.

  The global trace condition requires existence of a ``good'' trace for each infinite path.
  The appropriate notion of ``good'' traces depends on the logic.
  In \cite{brotherston}, a trace is ``good'' if it contains infinitely many principal occurrences of (\(\mu L\)) or (\(\nu R\)).
  In other words, a ``good'' trace contains infinitely many expansions of \( \mu \).
  This fairly simple condition comes from the restriction of usage of fixed-points: their logic does not allow alternation of fixed-points, e.g.~\( \mu P. \varphi \) is allowed only if the free predicate variables of \( \varphi \) are bound by \( \mu \).
  Allowing nested fixed-points makes the definition of ``good'' traces more complicated; the definition in \cite{doumane} refers to the most significant fixed-point operator among those that are expanded infinitely many times.
  The higher-order nature of \hfl{} requires us to more precisely track the usage of fixed-point operators.

  The following definition is inspired by the winning criterion of game semantics of \hfl{}~\cite{bruse14,kobayashi18,tsukada20}.
  {The idea is to track which occurences of fixed-point operators are unfolded infinitely in depth by annotating each occurrence with a sequence that grows with each unfolding.}
By abuse of notation, a path is written as a sequence \( (\Gamma_i \vdash \Delta_i)_{i = 1, 2, \dots} \) of sequents.
We often identify an occurrence of a formula with the formula.
For example, \( \tau_i \equiv \varphi \) means that \( \tau_i \) is an occurrence of \( \varphi \).

\begin{definition}[$\mu$-trace, $\nu$-trace] \label{def: trace-ho}
  Let $(\tau_i)_{i \geq 0}$ be a trace.
  We assign a sequence of natural numbers to every fixed-point operator in $\tau_i$ for all $i$ by the following algorithm.
  We use \( \sigma \) as a metavariable of fixed-point operators \( \{\, \mu,\nu \,\} \).
  We write \( \sigma_p \) for the fixed-point operator to which the sequence \( p \) is assigned.
  \begin{itemize}
    \item For every $\sigma$ in $\tau_0$, we assign $\epsilon$ to $\sigma$.
    \item
      If \( \tau_i \) is the principal occurrence of (\(\sigma L/R\)),
      \( \tau_i \) with annotation is \( \sigma_{p} x. \varphi \) (where fixed-point operators in \( \varphi \) are also annotated).
      Then \( \tau_{i+1} \) with annotation is \( \varphi[\sigma_{p.k} x.\varphi/x] \) where \( k \) is a natural number that has not been used in this annotation process.\footnote{{One can weaken the freshness requirement for \( k \): the minimal requirement is that the sequence \( p.k \) has not been used.}}
      \item
        If $\tau_i$ is the principal occurence of (Mono) and $\tau_{i+1}$ is in the $j$-th premises,
        $\tau_i$ with annotation is $\varphi[\psi_1/x_1, \cdots \psi_k/x_k]$ (see Remark~\ref{rem:mono}).
        Then $\tau_{i+1}$ with annotation is $\psi_j \vec{y}$.
    \item Otherwise, the sequence of a fixed-point operator in \( \tau_{i+1} \) comes from the corresponding operator in \( \tau_i \).
      For example, if \( \tau_i \) is the principal occurrence of (\(\lambda L/R\)) and \( \tau_i \) with annotation is \( (\lambda x. \varphi)\,\psi \), then \( \tau_{i+1} \) with annotation is \( \varphi[\psi/x] \).
  \end{itemize}
  Let $p[0:n]$ denote the sequence %
  consisting of the first $n$ elements %
  of $p$.
  We call $(\tau_i)_{i \geq 0}$ a \emph{$\mu$-trace (resp.~$\nu$-trace)}
  if there is an infinite sequence $p$ such that
  $p[0:n]$ is assigned to $\mu$ (resp.~$\nu$) in some $\tau_i$ for every natural number $n$.
\end{definition}
{In the definiton above, for each trace $(\tau_i)_{i \geq 0}$,
there is at most one infinite trace \(p\) that satisfies the condition above; hence,
no trace can be both a \(\mu\)-trace and a \(\nu\)-trace; see Lemma~\ref{lem: uniqueness}.}

\begin{example}\label{example:pre-proof}
  Let us consider the following pre-proof $(\mathcal{D}, \mathcal{R})$.
  \begin{center}
    \AxiomC{$(\star) \ \vdash (\nu f. \lambda g. g \, (f \, g)) \, (\mu x. \lambda a. a)$}
    \RightLabel{($\lambda R$)}
    \UnaryInfC{$\vdash (\lambda a. a) \, ((\nu f. \lambda g. g \, (f \, g)) \, (\mu x. \lambda a. a))$}
    \RightLabel{($\mu R$)}
    \UnaryInfC{$\vdash (\mu x. \lambda a. a) \, ((\nu f. \lambda g. g \, (f \, g)) \, (\mu x. \lambda a. a))$}
    \RightLabel{($\lambda R$)}
    \UnaryInfC{$\vdash (\lambda h.h\,((\nu f. \lambda g. g \, (f \, g))\,h))\,(\mu x. \lambda a. a)$}
    \RightLabel{($\nu R$)}
    \UnaryInfC{$(\star) \ \vdash (\nu f. \lambda g. g \, (f \, g)) \, (\mu x. \lambda a. a)$}
    \DisplayProof
  \end{center}
  In the diagram, the function \( \mathcal{R} \) is {indicated by the $\star$ marks:}
  the open leaf $(\star)$ is mapped to the other node marked \((\star)\).
  This pre-proof has a unique path, and the path has a unique trace $(\tau_i)_{i \geq 0}$ (since each sequent consists of a single formula).

  We assign {a sequence} of natural numbers to {each occurrence} of a fixed-point operator
  in the trace \( (\tau_i)_{i \geq 0} \).
    {Both} fixed-point operators in \( \tau_0 \) are annotated by the empty sequence:
    \begin{equation*}
      \tau_0
      \quad\equiv\quad
      (\nu_\epsilon f. \lambda g. g \, (f \, g)) \, (\mu_\epsilon x. \lambda a. a).
    \end{equation*}
    The first rule expands \( \nu \), and we annotate the recursive occurrence of this \( \nu \) with a fresh natural number, say \( 0 \):
    \begin{equation*}
      \tau_1
      \quad\equiv\quad
      (\lambda h.h\,((\nu_{0} f. \lambda g. g \, (f \, g))\,h))\,(\mu_{\epsilon} x. \lambda a. a).
    \end{equation*}
    The next rule is (\( \lambda R \)) and we just substitute the annotated formula \((\mu_{\epsilon} x. \lambda a. a)\) for \( h \):
    \begin{equation*}
      \tau_2
      \quad\equiv\quad
      (\mu_\epsilon x. \lambda a. a) \, ((\nu_{0} f. \lambda g. g \, (f \, g)) \, (\mu_\epsilon x. \lambda a. a)).
    \end{equation*}
    Then we expand \( \mu \) and annotate its recursive occurrences {(if there were any)} with \( 1 \);
    actually, since \( x \) does not appear in the body \( \lambda a.a \), the resulting formula does
    not have label \( 1 \).
    \begin{equation*}
      \tau_3
      \quad\equiv\quad
      (\lambda a. a) \, ((\nu_{0} f. \lambda g. g \, (f \, g)) \, (\mu_\epsilon x. \lambda a. a)).
    \end{equation*}
    Applying the \( \beta \)-reduction, we have
    \begin{equation*}
      \tau_4
      \quad\equiv\quad
      (\nu_{0} f. \lambda g. g \, (f \, g)) \, (\mu_\epsilon x. \lambda a. a).
    \end{equation*}
    The current node is the open leaf, and the next node is determined by \( \mathcal{R} \).
    The annotation is copied: \( \tau_5 \equiv \tau_4 \).
    The next rule is (\(\nu R\)) and we name the recursive occurrences \( 0.2 \), extending the annotation \( 0 \) by a fresh number \( 2 \):
    \begin{equation*}
      \tau_6
      \quad\equiv\quad
      (\lambda h.h\,((\nu_{0.2} f. \lambda g. g \, (f \, g))\,h))\,(\mu_{\epsilon} x. \lambda a. a).
    \end{equation*}
    By continuing this argument, we have
    \begin{equation*}
      \tau_{1 + 5k}
      \quad\equiv\quad
      (\lambda h.h\,((\nu_{p} f. \lambda g. g \, (f \, g))\,h))\,(\mu_{\epsilon} x. \lambda a. a)
    \end{equation*}
    where \( p = 0.2.4.\dots.(2k) \).
    Note that the annotation of \( \nu \) grows but that of \( \mu \) does not.
    Hence this trace is a \( \nu \)-trace but not a \( \mu \)-trace.
  \qed
\end{example}

A trace \( (\tau_i)_{i \geq 0} \) is a \emph{left trace} (resp.~\emph{right trace}) if \( \tau_0 \) occurs on the left (resp.~right) side of \( \vdash \).
Note that {every}  \( \tau_i \) occurs on the same side as \( \tau_0 \).
\begin{definition}[Cyclic proof]\label{def: cyclic-proof}
  A \emph{cyclic proof} %
  is a pre-proof that satisfies the \emph{global trace condition}: for every infinite path, %
  a tail of the path has a left $\mu$-trace or right $\nu$-trace.
\end{definition}

\begin{example}
  The pre-proof in Example~\ref{example:pre-proof} is a cyclic proof.
  \qed
\end{example}

\begin{remark}
  One may find our global trace condition (cf.~Definitions~\ref{def: trace-ho} and \ref{def: cyclic-proof}) complicated and wonder if it is possible to replace it with a simpler conidition such as the parity condition.
  A recent result~\cite[Theorem~25]{tsukada20} suggests a negative answer:
  It shows that the validity of \hfl formulas cannot be captured by parity games, but games with more complicated winning criteria.  Our global trace condition is inspired by the criteria.
  \qed
\end{remark}

\subsection{Some Admissible Rules} \label{subsec: rule}
{This subsection shows that some familiar rules for quantifiers and inductions are admissible in
our cyclic proof system.}

As we saw in \autoref{sec: hfl}, formulas with quantifiers can be expressed by fixed-points.
The proposition below enables us to use quantifier rules in our cyclic proof system.
\begin{proposition} \label{prop: quantifiers}
  If there is a cyclic proof %
  of $\Gamma \vdash \Delta$ derived by the rules in \autoref{fig: ded-rules} plus the following quantifier rules,
  then there exists a cyclic proof %
  of $\Gamma \vdash \Delta$ {without the quantifier rules.}

\begin{center}
  \AxiomC{$\Gamma, \varphi[\psi/x] \vdash \Delta$}
  \RightLabel{($\forall L$)}
  \UnaryInfC{$\Gamma, \forall x. \varphi \vdash \Delta$}
  \DisplayProof
  \qquad
  \AxiomC{$\Gamma \vdash \varphi, \Delta$}
  \RightLabel{$x \not \in FV(\Gamma, \Delta)$ ($\forall R$)}
  \UnaryInfC{$\Gamma \vdash \forall x. \varphi, \Delta$}
  \DisplayProof
\end{center}

\begin{center}
  \AxiomC{$\Gamma, \varphi \vdash \Delta$}
  \RightLabel{$x \not \in FV(\Gamma, \Delta)$ ($\exists L$)}
  \UnaryInfC{$\Gamma, \exists x. \varphi \vdash \Delta$}
  \DisplayProof
  \qquad
  \AxiomC{$\Gamma \vdash \varphi[\psi/x], \Delta$}
  \RightLabel{($\exists R$)}
  \UnaryInfC{$\Gamma \vdash \exists x. \varphi, \Delta$}
  \DisplayProof
\end{center}
\end{proposition}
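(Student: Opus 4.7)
The plan is to show that each of the four quantifier rules admits a translation into a cyclic sub-derivation using only the rules of \autoref{fig: ded-rules}, and that systematically replacing every quantifier-rule application in the given cyclic proof by its translation yields a cyclic proof in the base system. The translations proceed in two stages: first $(\forall R)$ and $(\exists L)$, whose simulations are self-contained, and then $(\forall L)$ and $(\exists R)$, whose simulations reuse the first-stage simulations through the Nat rule.

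For quantifiers over higher-order types $T$, where $\forall x^T.\varphi = (\lambda x.\varphi)\,\bot_T$ and $\exists x^T.\varphi = (\lambda x.\varphi)\,\top_T$ with $\bot_T = \mu x^T.x$ and $\top_T = \nu x^T.x$, a single $(\lambda L/R)$ step reduces each rule to a sequent involving $\bot_T$ or $\top_T$; combined with (Mono) and the single-cycle derivations of $\bot_T\,\vec{y} \vdash \chi\,\vec{y}$ and $\chi\,\vec{y} \vdash \top_T\,\vec{y}$ (whose only rule is a $(\mu L)$ or $(\nu R)$ looping back to itself, carrying a left $\mu$- or right $\nu$-trace), this closes the case. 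For $(\exists L)$ over $\mathbf{N}$, from the premise $\Gamma, \varphi \vdash \Delta$ with $x \notin \mathit{FV}(\Gamma,\Delta)$, I would construct a cyclic sub-derivation of the conclusion whose auxiliary goal is $\Gamma, (\mu Y.\lambda x.\varphi \lor Y(\mathbf{S}x))\,y \vdash \Delta$ for a fresh $y$: unfold the $\mu$, split via $(\lor L)$, dispatch the left disjunct $\Gamma, \varphi[y/x] \vdash \Delta$ as a Subst-instance of the premise, and close the right disjunct by a Subst $[\mathbf{S}y/y]$ step pointing back to the auxiliary goal. The back-edge carries a left $\mu$-trace through the unfolded $\mu Y$, and instantiating $y:=\mathbf{Z}$ at the root recovers the conclusion $\Gamma, \exists x.\varphi \vdash \Delta$. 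The case of $(\forall R)$ is dual, with a right $\nu$-trace through the $\nu$ of $\mathit{forall}$.

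For $(\forall L)$ and $(\exists R)$, the witness term $\psi$ is arbitrary and the simulations require Nat-based induction. For $(\forall L)$, I would cut the easy lemma $\mathit{forall}\,(\lambda x.\varphi)\,y \vdash \varphi[y/x]$ (one $\nu$-unfolding plus $(\land L)$) against the deeper lemma $\mathit{forall}\,(\lambda x.\varphi)\,\mathbf{Z} \vdash \mathit{forall}\,(\lambda x.\varphi)\,y$, which is proved by a cyclic sub-derivation whose auxiliary goal is strengthened by Nat to $\mathit{forall}\,(\lambda x.\varphi)\,\mathbf{Z}, N\,y \vdash \mathit{forall}\,(\lambda x.\varphi)\,y$, whose $\mu$-unfolding of $N\,y$ provides an induction principle, whose step case invokes the already simulated $(\exists L)$ to eliminate the $\exists y'$ in the body of $N$ and whose back-edge carries a left $\mu$-trace through $N$; then substitute $y := \psi$ and cut against the premise. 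The case of $(\exists R)$ is dual, using the already simulated $(\forall R)$.

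The main obstacle is verifying the global trace condition of the fully translated pre-proof. An infinite path in it either ultimately remains inside a single translation block, in which case the block's built-in $\mu$- or $\nu$-trace witnesses the condition, or it exits and re-enters translation blocks infinitely often, in which case its projection to the original cyclic proof (collapsing each block to the quantifier rule it replaces) is an infinite path whose good trace lifts back through the blocks. The lift-back relies on each block being transparent to traces on formulas other than the principal quantifier formula, which follows from the construction; the bookkeeping of fixed-point annotations through the intermediate Subst and Cut steps, though routine, is the most delicate part.
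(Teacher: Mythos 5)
Your proposal is correct, and for the higher-order quantifiers, $(\exists L)$, $(\forall R)$, and the final global-trace-condition argument (a tail of an infinite path either stays inside one translation block, whose built-in left $\mu$- or right $\nu$-trace closes the case, or the path projects to an infinite path of the original proof whose good trace lifts back) it coincides with the paper's proof essentially step for step. The one place where you take a genuinely different route is the hard pair $(\forall L)$/$(\exists R)$, where the witness $\psi$ is an arbitrary term. You factor the simulation into a projection lemma (e.g.\ $\mathit{forall}\,(\lambda x.\varphi)\,y \vdash \varphi[y/x]$, one unfolding) cut against a transport lemma (e.g.\ $\mathit{forall}\,(\lambda x.\varphi)\,\mathbf{Z} \vdash \mathit{forall}\,(\lambda x.\varphi)\,y$) proved by a Nat-strengthened cycle whose back-edge carries a left $\mu$-trace through $N\,y$. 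The paper instead introduces an order predicate $\mathord{\leq} \equiv \mu Y.\lambda n.\lambda m.(n=m) \lor Y\,(\mathbf{S}n,m)$, proves $\vdash \mathbf{Z} \leq t$ as a separate cyclic lemma (itself by Nat-induction), and then threads $z \leq t$ through the main loop, so that each unfolding of the existential's $\mu$ on the right is paired with an unfolding of $\leq$ on the left; the left $\mu$-trace lives in $\leq$ rather than in $N$. The two gadgets are interchangeable and rest on the same insight --- the loop that repeatedly unfolds the quantifier's fixed-point on the ``wrong'' side must be escorted by a left $\mu$-trace on an arithmetic predicate that counts up to the witness --- but your decomposition keeps the two concerns (reaching the witness index, and extracting/injecting the instance) in separate cut components, while the paper's interleaves them in a single cycle. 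Both versions share the same glossed-over technicality of discharging $N$ at a compound witness term via (Nat), which only introduces $N$ at a variable.
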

\begin{proof}
  \iffull
  See \autoref{ap: quantifiers}.
  \else
  See \cite{KoriCSLfull}.
  \fi
\end{proof}

We can also embed explicit induction rules, so-called Park's fixed-point rules:
\begin{center}
  \AxiomC{$\Gamma, \varphi[\chi/x] \, \vec{y} \vdash \chi \, \vec{y}, \Delta$}
  \AxiomC{$\Gamma, \chi \, \vec{\psi} \vdash \Delta$}
  \RightLabel{$\vec{y} \cap FV(\Gamma,  \varphi[\chi/x], \Delta) = \emptyset$ (Pre)}
  \BinaryInfC{$\Gamma, (\mu x. \varphi) \, \vec{\psi} \vdash \Delta$}
  \DisplayProof
\end{center}
\begin{center}
  \AxiomC{$\Gamma, \chi \, \vec{y} \vdash \varphi[\chi/x] \, \vec{y}, \Delta$}
  \AxiomC{$\Gamma \vdash \chi \, \vec{\psi}, \Delta$}
  \RightLabel{$\vec{y} \cap FV(\Gamma,  \varphi[\chi/x], \Delta) = \emptyset$ (Post)}
  \BinaryInfC{$\Gamma \vdash (\nu x. \varphi) \, \vec{\psi}, \Delta$}
  \DisplayProof
\end{center}
They are inspired by Knaster-Tarski's fixed-point theorem: these rules replace pre/postfixed-points
{with} least/greatest fixed-points.
The rule (Pre) is sound because $\chi$ is a prefixed-point by the left {premise}
and $\mu x. \varphi$ is the least one.
The same argument holds for (Post).
\begin{proposition} \label{prop: explicit}
  If there is a cyclic proof %
  of $\Gamma \vdash \Delta$ derived by the rules in \autoref{fig: ded-rules} $+$ (Pre) $+$ (Post),
  then there exists a cyclic proof %
  of $\Gamma \vdash \Delta$ {without (Pre) and (Post).}
\end{proposition}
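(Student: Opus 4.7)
The plan is to simulate each application of (Pre) and (Post) in the given cyclic proof by a small cyclic sub-derivation using only the rules of \autoref{fig: ded-rules}. Consider an instance of (Pre) with conclusion $\Gamma, (\mu x.\varphi)\,\vec{\psi} \vdash \Delta$ and premises $(I) = \Gamma, \varphi[\chi/x]\,\vec{y} \vdash \chi\,\vec{y}, \Delta$ and $(II) = \Gamma, \chi\,\vec{\psi} \vdash \Delta$. I would first apply an outer (Cut) with cut formula $\chi\,\vec{\psi}$, discharging one branch by $(II)$ (up to weakening); the remaining sequent $\Gamma, (\mu x.\varphi)\,\vec{\psi} \vdash \chi\,\vec{\psi}, \Delta$ is obtained via (Subst) from the generalization $(\dagger) \equiv \Gamma, (\mu x.\varphi)\,\vec{y} \vdash \chi\,\vec{y}, \Delta$ for fresh $\vec{y}$. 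I would then prove $(\dagger)$ cyclically from itself: apply $(\mu L)$ to unfold the fixed-point, then an inner (Cut) with cut formula $\varphi[\chi/x]\,\vec{y}$, closing the right branch by $(I)$ up to (Wk L); the left branch $\Gamma, \varphi[\mu x.\varphi/x]\,\vec{y} \vdash \varphi[\chi/x]\,\vec{y}, \chi\,\vec{y}, \Delta$ follows by (Mono) with template $\varphi\,\vec{y}$ and substitutions $\mu x.\varphi, \chi$ for the free variable $x$, and each (Mono) premise $\Gamma, (\mu x.\varphi)\,\vec{z} \vdash \chi\,\vec{z}, \chi\,\vec{y}, \Delta$ (fresh $\vec{z}$) is derived from $(\dagger)$ by (Subst) and (Wk R), closing the loop.

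The simulation of (Post) is dual: an outer (Cut) with cut formula $\chi\,\vec{\psi}$ on the right consumes one (Post)-premise, and the remaining sequent reduces via (Subst) to $(\dagger') \equiv \Gamma, \chi\,\vec{y} \vdash (\nu x.\varphi)\,\vec{y}, \Delta$, which is proved cyclically via $(\nu R)$, an inner (Cut) with $\varphi[\chi/x]\,\vec{y}$ using the other (Post)-premise, and (Mono) whose premise is reached from $(\dagger')$ by (Subst) and (Wk L). I would then verify the trace condition on each cycle locally: starting at $(\dagger)$ with $\tau_0 \equiv (\mu x.\varphi)\,\vec{y}$ on the left, the $(\mu L)$ step extends the annotation of $\mu$ by a fresh index, while the subsequent steps (Cut), (Mono), (Wk R), (Subst) carry the trace back to $(\dagger)$ without further altering the annotation; in particular, the (Mono) step sends the conclusion occurrence $\varphi[\mu x.\varphi/x]\,\vec{y}$ to $(\mu x.\varphi)\,\vec{z}$ in its $j$-th premise with the same annotation, per Definition~\ref{def: trace-ho}. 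Each traversal of the cycle therefore strictly extends the annotation, yielding an infinite left $\mu$-trace; dually, each traversal of the (Post) simulation yields a right $\nu$-trace.

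Finally, I would check that replacing every Park-rule application in the original proof by its simulation yields a pre-proof satisfying the global trace condition. Any infinite path in the transformed proof either eventually remains inside one of the newly introduced simulation cycles — in which case the local trace just described is the required good trace — or crosses infinitely many edges that were present in the original proof; contracting each simulation visit to the corresponding Park-rule application then turns such a path into an infinite path in the original cyclic proof, whose good trace can be pulled back to the transformed proof because the ambient formulas in $\Gamma, \Delta$ (and on the boundaries $(I)$, $(II)$) are preserved unchanged across each simulation. The main obstacle I expect is to make this pull-back precise: one must verify that the annotations on non-principal formulas survive the simulation intact, and that the concatenation of a good original trace with fragments of a simulation's internal traces is itself a left $\mu$- or right $\nu$-trace.
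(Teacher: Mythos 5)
Your proposal is correct and follows essentially the same route as the paper's proof: an outer (Cut) on $\chi\,\vec{\psi}$ discharged by the second Park premise, a generalized sequent $\Gamma, (\mu x.\varphi)\,\vec{y} \vdash \chi\,\vec{y}, \Delta$ proved by a one-node cycle combining ($\mu L$), an inner (Cut) on $\varphi[\chi/x]\,\vec{y}$ closed by the first premise, and (Mono) looping back, with the dual construction for (Post). The only differences are cosmetic (you apply ($\mu L$) before the inner cut and absorb the extra $\chi\,\vec{y}$ with fresh variables and (Subst) rather than weakening it away before (Mono)), and your final case split on infinite paths --- tail trapped in a simulation cycle versus infinitely many original edges --- is exactly the paper's concluding argument.
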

\begin{proof}
 { Given a cyclic proof \(\Pi\) that may use (Pre) and (Post),
  we first construct a pre-proof $\Pi'$ by removing (Pre) and (Post).
  For every instance of (Pre) of the form:}
  \begin{center}
    \AxiomC{$\Gamma, \varphi[\chi/x] \, \vec{y} \vdash \chi \, \vec{y}, \Delta$}
    \AxiomC{$\Gamma, \chi \, \vec{\psi} \vdash \Delta$}
    \RightLabel{(Pre)}
    \BinaryInfC{$\Gamma, (\mu x. \varphi) \, \vec{\psi} \vdash \Delta$}
    \DisplayProof
  \end{center}
  we will replace it with the following diagram.
  {\footnotesize
  \begin{center}
    \AxiomC{$\Pi$}
      \AxiomC{$\Gamma, \chi \, \vec{\psi} \vdash \Delta$}
      \RightLabel{(Wk L)}
      \UnaryInfC{$\Gamma, (\mu x. \varphi) \, \vec{\psi}, \chi \, \vec{\psi} \vdash \Delta$}
      \RightLabel{(Cut)}
    \BinaryInfC{$\Gamma, (\mu x. \varphi) \, \vec{\psi} \vdash \Delta$}
    \DisplayProof
  \end{center}

  \begin{center}
  $\Pi := $
    \AxiomC{$\Gamma, \varphi[\chi/x] \, \vec{y} \vdash \chi \, \vec{y}, \Delta$}
    \RightLabel{(Wk L)}
    \UnaryInfC{$\Gamma, (\mu x. \varphi) \, \vec{y}, \varphi[\chi/x] \, \vec{y} \vdash \chi  \, \vec{y}, \Delta$}
      \AxiomC{$(\star) \ \Gamma, (\mu x. \varphi) \, \vec{y} \vdash \chi  \, \vec{y}, \Delta$}
      \RightLabel{(Mono)}
      \UnaryInfC{$\Gamma, \varphi[\mu x. \varphi/x] \, \vec{y} \vdash \varphi[\chi/x] \, \vec{y}, \Delta$}
      \RightLabel{(Wk R)}
      \UnaryInfC{$\Gamma, \varphi[\mu x. \varphi/x] \, \vec{y} \vdash \chi  \, \vec{y}, \varphi[\chi/x] \, \vec{y}, \Delta$}
      \RightLabel{($\mu$L)}
      \UnaryInfC{$\Gamma, (\mu x. \varphi) \, \vec{y} \vdash \chi  \, \vec{y}, \varphi[\chi/x] \, \vec{y}, \Delta$}
    \RightLabel{(Cut)}
    \BinaryInfC{$(\star) \ \Gamma, (\mu x. \varphi) \, \vec{y} \vdash \chi  \, \vec{y}, \Delta$}
    \RightLabel{(Subst)}
    \UnaryInfC{$\Gamma, (\mu x. \varphi) \, \vec{\psi} \vdash \chi  \, \vec{\psi}, \Delta$}
    \DisplayProof
  \end{center}
  }
  (Post) can also be removed in the same manner.

  We show that the resulting pre-proof $\Pi'$ is a cyclic proof.
  For each infinite path $\pi$ in $\Pi'$,
  if some tail of $\pi$ goes through only one cycle as the above one from $(\star)$ to $(\star)$
  then we can trace the left $\mu$ in $\mu x. \varphi$ or the right $\nu$ in $\nu x. \varphi$.
  Otherwise, there exists a corresponding infinite path in $\Pi$, which satisfies the global trace condition.
  Therefore $\Pi'$ is a cyclic proof of $\Gamma \vdash \Delta$.
\end{proof}

\subsection{Examples} \label{subsec: eg}
This subsection presents two examples of cyclic proofs.

\begin{example}[Well-foundedness of a tree]
  In this example, we write %
  $\mathbf{N}^*$ for the type of finite sequences of natural numbers.
  We also use the concatenation operation \( ({-}) \cdot ({-}) \).
  This $\mathbf{N}^*$ can be expressed by $\mathbf{N}$ and thus this additional type does not increase the expressivity.

  {A \emph{tree} is a subset of finite sequences of natural numbers that represents the complement of the tree;
  the idea is to regard \( \epsilon \) as the root and \( p \) as the parent of \( p \cdot i \).}
  Let $f^{\mathbf{N}^* \to \mathbf{\Omega}}$ be a term representing a tree.
  We define $\Phi$ and $\Psi$ as follows:
  \begin{align*}
    \Phi &:= \mu w^{(\mathbf{N^*} \to \mathbf{\Omega}) \to \mathbf{\Omega}}. \lambda k^{\mathbf{N^*} \to \mathbf{\Omega}}. k \, \epsilon \lor \forall i^\mathbf{N}. w \, (\lambda z^{\mathbf{N}^*}. k \, (i \cdot z)) \\
    \Psi &:= \mu v^{\mathbf{N}^* \to \mathbf{\Omega}}. \lambda z^{\mathbf{N}^*}. f \, z \lor \forall i^{\mathbf{N}}. v \, (z \cdot i)
  \end{align*}
  Then both $\Phi \, f$ and $\Psi \, \epsilon$ represent well-foundedness of $f$, i.e.~whether there is no infinite path in $f$.
  $\Phi$ checks whether the tree $k$ satisfies well-foundedness.
  This returns true if $k$ has only one node or
all the immediate children of the root satisfy well-foundedness.
  $\Psi$ checks whether the subtree of $f$ whose root node is $z$ satisfies well-foundedness.
  This returns true if $z$ is a leaf or all subtrees under $z$ satisfy well-foundedness.

  A cyclic proof of $\Phi \, f \vdash \Psi \, \epsilon$ is given as follows:

  \begin{center}
  {\footnotesize
    \AxiomC{}
    \RightLabel{(Axiom)}
    \UnaryInfC{$f \, l \vdash f \, l$}
      \AxiomC{$(\dagger) \quad \Phi \, (\lambda z. f \, (l \cdot n \cdot z)) \vdash \Psi \, (l \cdot n)$}
      \AxiomC{$(\star) \quad Y_\Phi \, (\mathbf{S}n) \vdash Y_\Psi \, (\mathbf{S}n)$}
      \RightLabel{($\land L, R$)}
      \BinaryInfC{$\Phi \, (\lambda z. f \, (l \cdot n \cdot z)) \land Y_\Phi \, (\mathbf{S}n) \vdash \Psi \, (l \cdot n) \land Y_\Psi \, (\mathbf{S}n)$}
      \RightLabel{($\nu L, R$)}
      \UnaryInfC{$(\star) \quad Y_\Phi \, n \vdash Y_\Psi \, n$}
      \RightLabel{(Subst)}
      \UnaryInfC{$Y_\Phi \, \mathbf{Z} \vdash Y_\Psi \, \mathbf{Z}$}
      \RightLabel{($\lor L, \lor R, Wk$)}
    \BinaryInfC{$f \, l \lor \forall i. \Phi \, (\lambda z. f \, (l \cdot i \cdot z)) \vdash f \, l \lor \forall i. \Psi \, (l \cdot i)$}
    \RightLabel{($\mu L, R$)}
    \UnaryInfC{$(\dagger) \quad \Phi \, (\lambda z. f \, (l \cdot z)) \vdash \Psi \, l$}
    \RightLabel{(Subst) and $f \equiv \lambda z. f \, (\epsilon \cdot z)$}
    \UnaryInfC{$\Phi \, f \vdash \Psi \, \epsilon$}
    \DisplayProof}
  \end{center}
  where $Y_\Phi \equiv (\nu Y. \lambda i. \Phi \, (\lambda z. f \, (l \cdot i \cdot z)) \land Y \, (\mathbf{S}i))$
  and $Y_\Psi \equiv (\nu Y. \lambda i. \Psi \, (l \cdot i) \land Y \, (\mathbf{S}i))$.
  Here we omit (Subst) for open leaves;
  hence cycles of $(\star)$ and $(\dagger)$ are valid, although two nodes for each label have different sequents.
  Note that $Y_\Phi \, \mathbf{Z} \equiv \forall i. \Phi \, (\lambda z. f \, (l \cdot i \cdot z))$
  and $Y_\Psi \equiv \forall i. \Psi \, (l \cdot i)$.

  For all infinite paths,
  if the path includes $(\dagger) \to (\dagger)$ infinitely
  then we can trace the left $\mu$ in $\Phi$ and otherwise we can trace the right $\nu$ in $\forall i. \Psi \, (l \cdot i)$.
  \qed
\end{example}

The example below demonstrates an application of our cyclic proof system
to program verification, based on the reduction of
Kobayashi et al.~\cite{kobayashi18, watanabe19} from program verification to
\hfl{} validity checking.
\begin{example}[Example 2.4 and 3.3 in~\cite{watanabe19}] \leavevmode
  Consider the following OCaml-like program.
  \begin{verbatim}
    let rec repeat f x =
      if x = 0 then ()
      else if * then repeat f (f x) else repeat f (x-1)
    in let y = input() in
      repeat (fun x -> x-y) n
  \end{verbatim}
  In this program, \verb_*_ represents a non-deterministic Boolean value and
  \verb_input()_ means   %
  a user input.
  We aim to verify that an %
  appropriate input \verb|y| makes this program eventually terminate.

  To verify this, we have to check $\vdash \text{input} \, 0 \, (g \, n)$ where
  \begin{align*}
    \text{repeat} &:= \mu R. \lambda f. \lambda x. (x = 0) \lor (\exists x'. x = x'+1 \land f \, x \, (R \, f) \land R \, f \, x') \\
    \text{sub} &:= \lambda y. \lambda x. \lambda k. k \, (x-y) \\
    g &:= \lambda z. \lambda y. \text{repeat} \, (\text{sub} \, y ) \, z \\
    \text{input} &:= \mu I. \lambda x. \lambda k. (k \, x) \lor (I \, (x+1) \, k)
  \end{align*}
  where $(-)$ is defined naturally by using $\mu$.
{Here, functions on integers of type $\mathbf{N} \to \mathbf{N}$ in the program have been turned into
  predicates of type $\mathbf{N} \to (\mathbf{N} \to \mathbf{\Omega}) \to \mathbf{\Omega}$,
  which are obtained by CPS translation; for example, the function
  \texttt{fun x->x-y} has been turned into \(\text{sub}\,y\ (\equiv \lambda x.\lambda k.k(x-y))\).
  Note that \(\text{input} \, 0 \, (g \, n)\) is equivalent to
  \(\exists y.g\,n\,y\), which models an angelic non-determinsm of \texttt{input()} in the program.
}

  The goal sequent can be proved by the following cyclic proof:
  \begin{center}
    \AxiomC{}
    \RightLabel{($=$R)}
    \UnaryInfC{$\vdash 0=0$}
    \RightLabel{($\mu$R, $\lor$R)}
    \UnaryInfC{$\vdash \text{repeat} \, (\text{sub} \, 1) \, 0$}
    \RightLabel{($=$L)}
    \UnaryInfC{$n=0 \vdash \text{repeat} \, (\text{sub} \, 1) \, n$}
      \AxiomC{$\Pi$}
      \RightLabel{($\mu$L)}
      \BinaryInfC{$(\star, \dagger) \ \mathbf{N} \, n \vdash \text{repeat} \, (\text{sub} \, 1) \, n$}
    \RightLabel{($\lambda$R, Nat)}
    \UnaryInfC{$\vdash g \, n \, 1$}
    \RightLabel{(Wk R)}
    \UnaryInfC{$\vdash g \, n \, 1, \text{input} \, 2 \, (g \, n)$}
    \RightLabel{($\mu$R, $\lor$R)}
    \UnaryInfC{$\vdash \text{input} \, 1 \, (g \, n)$}
    \RightLabel{(Wk R)}
    \UnaryInfC{$\vdash g \, n \, 0, \text{input} \, 1 \, (g \, n)$}
    \RightLabel{($\mu$R, $\lor$R)}
    \UnaryInfC{$\vdash \text{input} \, 0 \, (g \, n)$}
    \DisplayProof
  \end{center}
  \begin{center}
    $\Pi := $
    \AxiomC{$(\star) \ \mathbf{N} \, n' \vdash \text{repeat} \, (\text{sub} \, 1) \, n'$}
    \RightLabel{($\lambda$R)}
    \UnaryInfC{$\mathbf{N} \, n' \vdash (\text{sub} \, 1) \, (n'+1) \, (\text{repeat} \, (\text{sub} \, 1))$}
    \AxiomC{$(\dagger) \ \mathbf{N} \, n' \vdash \text{repeat} \, (\text{sub} \, 1) \, n'$}
    \RightLabel{($\land$R)}
    \BinaryInfC{$\mathbf{N} \, n' \vdash (\text{sub} \, 1) \, (n'+1) \, (\text{repeat} \, (\text{sub} \, 1)) \land \text{repeat} \, (\text{sub} \, 1) \, n'$}
    \RightLabel{($\mu$R)}
    \UnaryInfC{$\mathbf{N} \, n' \vdash \text{repeat} \, (\text{sub} \, 1) \, (n'+1)$}
    \RightLabel{($=$L)}
    \UnaryInfC{$\mathbf{N} \, n', n=n'+1 \vdash \text{repeat} \, (\text{sub} \, 1) \, n$}
    \RightLabel{($\exists$L, $\land$L)}
    \UnaryInfC{$\exists n'. \mathbf{N} \, n' \land n=n'+1 \vdash \text{repeat} \, (\text{sub} \, 1) \, n$}
    \DisplayProof
  \end{center}

  This satisfies the global trace condition because we can trace the left $\mu$ in $\mathbf{N}$ for every infinite path.
  \qed
\end{example}

\section{Decidability of the Global Trace Condition}
{For our cyclic proof system to be useful,
there should}
exist an algorithm to check whether a given proof candidate is indeed a cyclic proof.
{It is easy to check
whether a given candidate is a pre-proof,
but it is non-trivial to check whether the pre-proof also satisfies
the global trace condition.}
In this section %
we prove %
that the global condition is indeed decidable.
We follow the approach in \cite{brotherston}, which reduces the global trace condition of a pre-proof to \buchi automata containment.
One automaton accepts all infinite paths of the pre-proof and the other accepts
those that satisfy %
the global trace condition.
Then whether the pre-proof is a cyclic proof corresponds to the %
inclusion between these automata.

Let us briefly recall the definition of \buchi automata.
A \emph{(nondeterministic) B\"uchi automaton} is a tuple $(Q, \Sigma, \delta, Q_0, F)$ where
$Q$ is a finite set of elements called \emph{states},
$\Sigma$ is a finite set of symbols,
$\delta: Q \times \Sigma \times Q$ is a \emph{transition relation},
$Q_0 \subseteq Q$ is a set of \emph{initial states}, and
$F \subseteq \delta$ is a set of accepting transition rules.\footnote{This differs from the standard definition, in which the acceptance condition is specified by the set of accepting \emph{states}.  It is not difficult to see that this change does not affect the expressive power.}
Given an infinite word $w = (a_i)_{i \geq 0} \in \Sigma^\omega$,
a \emph{run} over $w$ is a sequence $(q_i)_{i \geq 0}$ of states such that $q_0 \in Q_0$ and $(q_i, a_i, q_{i+1}) \in \delta$ for all $i \geq 0$.
This run is \emph{accepting} if \( (q_i, a_i, q_{i+1}) \in F \) for infinitely many \( i \).
  The automaton \emph{accepts} an infinite word if there is an accepting run over the word.

Let \( (\mathcal{D}, \mathcal{R}) \) be a given pre-proof.  The alphabet \( \Sigma \) is the set of nodes of \( \mathcal{D} \).
Then an infinite path is represented as an infinite word, and it is easy to construct an automaton \( \mathcal{A}_{\textit{path}} \) that accepts all the paths of \( \mathcal{D} \).
We define an automaton \( \mathcal{A}_{\textit{gtc}} \) that checks if a given path satisfies the global trace condition.

The idea of the automaton is as follows.
Recall Definition~\ref{def: trace-ho}, in which we assign a sequence of natural numbers to each occurrence of a fixed-point operator.
One cannot directly simulate the annotation process by an automaton since the set of sequences of natural numbers is infinite.
However, at the end of Definition~\ref{def: trace-ho}, we focus on an infinite sequence \( p \) of natural numbers, and sequences that are not prefixes of \( p \) can be safely ignored.
Furthermore, it suffices to remember exactly one finite sequence by the following argument.
  Consider the case that \( \tau_i \equiv \sigma_{q} x. \varphi \) where \( q \) is a prefix of \( p \), and \( \tau_{i+1} \equiv \varphi[\sigma_{q.k} x. \varphi/x] \).
  \begin{itemize}
  \item
    If \( q.k \) is not a prefix of \( p \), then we can safely forget the annotation \( q.k \).
  \item
    If \( q.k \) is a prefix of \( p \), then we can safely forget the annotation \( q \).
  \end{itemize}
  To see the latter, observe that other expansions of \( \sigma_{q} \) generate annotations \( q.k' \) with \( k' \neq k \) by freshness of \( k' \).
  Hence \( q.k' \) is not a prefix of \( p \) and thus we can forget it.
  So any extension of \( q \) that will be generated afterward can be safely ignored, and thus we can forget \( q \) itself.

The above argument motivates the following definition.
\begin{definition}
  A \emph{marked formula} \( \check{\varphi} \) is a formula in which some occurrences of fixed-point operators \( \sigma \) are \emph{marked}; a marked fixed-point operator is written as \( \sigma_{\bullet} \).
  We write \( |\check{\varphi}| \) for the (standard) formula obtained by removing marks.
  An \emph{occurrence-with-marks} \( \check{\tau} \) is a pair \( (\tau, \check{\varphi}) \) of an occurrence \( \tau \) and a marked formula \( \check{\varphi} \) such that \( \tau \equiv |\check{\varphi}| \).
\end{definition}

  We define \( \mathcal{A}_{\mathit{gtc}} \).
The set of states of \( \mathcal{A}_{\mathit{gtc}} \) consists of occurrences-with-marks of \( \mathcal{D} \) and a distinguished (initial) state \( \ast \).
Most rules just simulate Definition~\ref{def: trace-ho}.
For example, if \( \tau \) is the principal occurrence in the conclusion of (\( \lambda L \)), \( (\lambda x. \check{\varphi})\,\check{\psi} \) is a marked formula such that \( \tau \equiv (\lambda x. |\check{\varphi}|)\,|\check{\psi}| \) and \( n \) is the premise, then \( ((\tau, (\lambda x. \check{\varphi})\,\check{\psi}), n, (\tau', \check{\varphi}[\check{\psi}/x])) \) where \( \tau' \) is the unique occurrence in \( v \) that is relevant to \( \tau \).
Important transition rules are those dealing with (\( \sigma L/R \)).
Consider the state \( \check{\tau} = (\tau, \check{\varphi}) \) where \( \tau \) is the principal occurrence of (\( \sigma L/R \)).
Let \( n \) be the premise and \( \tau' \) be the unique relevant occurrence in \( n \).
\begin{itemize}
\item Case \( \check{\varphi} \equiv (\sigma x. \check{\psi})\,\check{\vec{\chi}} \):
  The automaton just unfolds the fixed-point operator, i.e.,
  \begin{equation*}
    ((\tau, (\sigma x. \check{\psi})\,\check{\vec{\chi}}),~ n,~ (\tau', \check{\psi}[(\sigma x. \check{\psi})/x]\,\check{\vec{\chi}})) \in \delta.
  \end{equation*}
\item Case \( \check{\varphi} \equiv (\sigma_{\bullet} x. \check{\psi})\,\check{\vec{\chi}} \):
  {Note that there may be other copies of \( \sigma_{\bullet} x. \check{\psi} \) in \( \check{\vec{\chi}} \) or \( \check{\psi} \).
  So the automaton has to choose which copy should be tracked, and the transition is nondeterministic.
  If the automaton decides to track the occurrence of \( \sigma_{\bullet} x. \check{\psi} \) being unfolded, it removes all marks in \( \check{\vec{\chi}} \) and tracks the recursive calls of the head occurrence by marking them:
  \begin{equation*}
    ((\tau, (\sigma_{\bullet} x. \check{\psi})\,\check{\vec{\chi}}),~ n,~ (\tau', |\check{\psi}|[(\sigma_{\bullet} x. |\check{\psi}|)/x]\,|\check{\vec{\chi}}|))
    \in \delta.
  \end{equation*}
  Otherwise it removes the mark of the fixed-point operator being unfolded and unfolds it:
  \begin{equation*}
    ((\tau, (\sigma_{\bullet} x. \check{\psi})\,\check{\vec{\chi}}),~ n,~ (\tau', \check{\psi}[(\sigma x. \check{\psi})/x]\,\check{\vec{\chi}})),
    \in \delta.
  \end{equation*}
  If the rule is (\( \mu L \)) or (\( \nu R \)), then the former transition is an accepting transition.
  All other transitions for (\( \sigma L/R \)) and other rules are not accepting.
  }
\end{itemize}
The initial state either ignores the input (\( (\ast, n, \ast) \in \delta \)) or nondeterministically chooses an occurrence-with-marks (\( (\ast, n, \check{\tau}) \in \delta \) if \( \check{\tau} \) is an occurrence in \( n \) and has exactly one marked fixed-point operator).

\begin{lemma}
  Let $(\mathcal{D}, \mathcal{R})$ be a pre-proof.
  For every infinite path $\pi$,
  $\pi \in \mathcal{L}(\mathcal{A}_{\mathit{gtc}})$ if and only if $\pi$ satisfies the global trace condition.
\end{lemma}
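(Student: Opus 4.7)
My plan is to prove the biconditional by establishing a tight correspondence between accepting runs of $\mathcal{A}_{\mathit{gtc}}$ on $\pi$ and witnesses to the global trace condition for $\pi$. The central invariant I will maintain is that, once the automaton has left the initial state $\ast$, at every time $i$ the state $(\tau_i, \check{\varphi}_i)$ carries marks exactly on those fixed-point operators in $\tau_i$ whose annotation (in the sense of \autoref{def: trace-ho}, under a suitably chosen instantiation of the ``fresh number'' slots) equals a single common prefix $q_i$ of the witness sequence $p$ being reconstructed. This invariant is compatible with the automaton's design because the tracking transition (case 1 of $(\sigma L/R)$) wipes all marks in both $\check{\psi}$ and $\check{\vec{\chi}}$ and reintroduces marks only on the recursive copies of the unfolded head operator, while the non-tracking transition (case 2) merely removes the head mark and preserves every other.

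For the direction global trace condition $\Rightarrow$ acceptance, I assume without loss of generality that a tail of $\pi$ starting at step $i_0$ carries a left $\mu$-trace $(\tau_i)_{i \geq i_0}$ with an infinite witness sequence $p$. The accepting run I construct stays in $\ast$ for $i < i_0$, transitions at step $i_0$ to the occurrence-with-marks state marking the $\mu$-operator in $\tau_{i_0}$ carrying annotation $\epsilon$, and thereafter follows the trace. Whenever a marked fixed-point is unfolded and its annotation equals the current prefix $p[0:k]$, I choose case 1, which is an accepting transition whenever the rule is $(\mu L)$; whenever a marked fixed-point is unfolded whose annotation is not a prefix of $p$, I choose case 2. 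Because $p$ is infinite, case 1 on $(\mu L)$ is triggered infinitely often, so the run is accepting.

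For the converse direction, given an accepting run, I extract the occurrence components $(\tau_i)$ of its non-$\ast$ states; this is a trace in the sense of \autoref{def: trace-ho} because each accepting transition sits at a principal occurrence of a logical rule and such transitions occur infinitely often. I define $p$ by declaring its $k$-th entry to be the fresh number used at the $k$-th case-1 accepting transition when the annotation procedure of \autoref{def: trace-ho} is instantiated to match the run, which is permitted by the freedom in choosing fresh numbers. The mark invariant then guarantees that between the $n$-th and $(n+1)$-th accepting transitions, $p[0:n]$ appears as the annotation of each marked fixed-point operator, so $(\tau_i)$ is a left $\mu$-trace (or a right $\nu$-trace, symmetrically) as required. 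The principal technical obstacle is the step-by-step verification of the mark invariant across all transition rules, particularly the reset behaviour of case 1 combined with the observation that the automaton enters its first non-$\ast$ state with exactly one mark: these two facts conspire to ensure that at every moment all marks in the state share a common annotation prefix, which is precisely what allows a single witness sequence $p$ to be recovered from a finite-state run.
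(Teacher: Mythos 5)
Your proposal follows essentially the same route as the paper's own proof: in both directions the key correspondence is that the marked operators in the automaton state are exactly those occurrences whose annotation (in the sense of Definition~\ref{def: trace-ho}) is the current longest prefix of the witness sequence $p$, with the tracking (case-1) transitions advancing that prefix and the accepting transitions isolating $(\mu L)$/$(\nu R)$. One detail to tighten when you verify the invariant: in the forward direction the choice between case 1 and case 2 must be governed by whether the fresh number generated by \emph{this particular} unfolding yields the next prefix $p[0:k{+}1]$ --- by freshness only one of the possibly several marked copies annotated $p[0:k]$ does --- rather than by whether the current annotation is itself a prefix of $p$.
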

\begin{proof}
  Assume that \( \pi \) satisfies the global trace condition.
  Let \( ((\tau_i)_{i \geq 0}, p) \) be the pair of a trace (with annotations) and an infinite sequence of natural numbers that witnesses the global trace condition.
  For each \( i \), let \( q_i \) be the longest prefix of \( p \) among the annotations in \( \tau_i \).
  Let us mark \( \sigma_{q_i} \) in \( \tau_i \) and write \( \check{\varphi}_i \) for the resulting marked formula.
  Then \( (\tau_i, \check{\varphi}_i)_{i \geq 0} \) is an accepting run.

  We prove the converse.
  Assume a (possibly non-accepting) run, which determines a trace \( (\tau_i)_{i \geq 0} \).
  We annotate \( (\tau_i)_{i \geq 0} \) following Definition~\ref{def: trace-ho}.
  Let \( q_i \) be the sequence assigned to the marked operator in \( \tau_i \) and \( p \) be the limit of \( (q_i)_{i \geq 0} \).
  The transition rules ensure the well-definedness of \( q_i \) and \( p \).
  If the run is accepting, \( p \) is infinite since \( (q_i)_{i \geq 0} \) must grow infinitely many times.
  Furthermore it is a left \( \mu \)-trace or right \( \nu \)-trace as all accepting transitions are for (\( \mu L \)) or (\( \nu R \)).
\end{proof}

We have the following theorem as a corollary.
\begin{theorem}
  The validity checking of a pre-proof $(\mathcal{D}, \mathcal{R})$ is decidable.
\end{theorem}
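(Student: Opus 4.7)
The plan is to leverage the lemma immediately preceding the theorem, which reduces the global trace condition for a single path to membership in the Büchi language $\mathcal{L}(\mathcal{A}_{\mathit{gtc}})$. Combined with the construction of $\mathcal{A}_{\mathit{path}}$, this should turn the theorem into an instance of Büchi language inclusion, which is classically decidable.

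First, I would unpack the definitions. By Definition~\ref{def: cyclic-proof}, $(\mathcal{D},\mathcal{R})$ is a cyclic proof exactly when every infinite path of the pre-proof satisfies the global trace condition. By construction of $\mathcal{A}_{\mathit{path}}$, we have that $\mathcal{L}(\mathcal{A}_{\mathit{path}})$ is precisely the set of infinite paths of $(\mathcal{D},\mathcal{R})$, and by the preceding lemma, an infinite path $\pi$ satisfies the global trace condition iff $\pi \in \mathcal{L}(\mathcal{A}_{\mathit{gtc}})$. Combining these, checking that $(\mathcal{D},\mathcal{R})$ is a cyclic proof is equivalent to checking the language inclusion
\[
  \mathcal{L}(\mathcal{A}_{\mathit{path}}) \;\subseteq\; \mathcal{L}(\mathcal{A}_{\mathit{gtc}}).
\]

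Next, I would verify that both automata are genuinely finite, so that the classical decidability results for Büchi automata apply. For $\mathcal{A}_{\mathit{path}}$ this is immediate, since its state alphabet is just the (finite) set of nodes of $\mathcal{D}$. For $\mathcal{A}_{\mathit{gtc}}$ one has to be slightly more careful: although the transition rules involve unfoldings like $\check{\psi}[(\sigma x.\check{\psi})/x]\,\check{\vec{\chi}}$ that superficially look like they could produce arbitrarily large formulas, the state is a pair $(\tau,\check{\varphi})$ where $\tau$ is an occurrence in the finite tree $\mathcal{D}$ and $\check{\varphi}$ is a marking of the underlying formula $|\check{\varphi}| \equiv \tau$. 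Since $\mathcal{D}$ has only finitely many occurrences and each occurrence contains only finitely many fixed-point operators, there are only finitely many possible markings; adding the distinguished initial state $\ast$ keeps the state space finite.

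Finally, I would invoke the classical result that containment of nondeterministic Büchi automata is decidable (via complementation followed by emptiness testing, or one of the more modern algorithms). This closes the argument. The only genuinely delicate step is the finiteness check for $\mathcal{A}_{\mathit{gtc}}$, because the description of the transition relation uses substitution; making sure that the post-transition marked formula can always be re-indexed against a relevant occurrence of $\mathcal{D}$ (so that the state remains in the finite set of occurrences-with-marks) is where the bookkeeping has to be done carefully. Everything else is a direct translation between the semantic global trace condition and automata-theoretic inclusion.
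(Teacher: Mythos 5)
Your proposal is correct and follows essentially the same route as the paper: the theorem is stated there as a corollary of the preceding lemma, obtained by reducing the global trace condition to the B\"uchi inclusion $\mathcal{L}(\mathcal{A}_{\mathit{path}}) \subseteq \mathcal{L}(\mathcal{A}_{\mathit{gtc}})$, which is decidable. Your extra care about the finiteness of the state space of $\mathcal{A}_{\mathit{gtc}}$ is a point the paper leaves implicit (states are occurrences-with-marks over the finite tree $\mathcal{D}$, hence finitely many), and it is resolved exactly as you suggest.
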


\section{Soundness}
The soundness proof follows the proof strategy of \cite{brotherston}.
  We prove the claim by contraposition.
  Assume that the conclusion \( \Gamma \vdash \Delta \) of a cyclic proof is invalid.
  Then there exists a valuation \( \rho \) such that \( \Gamma \not\models_{\rho} \Delta \).
  Since all rules are \emph{locally sound} (i.e.~if the premises are valid under a valuation, the conclusion is also valid under the valuation),
  there exists an infinite path whose sequents are invalid under \( \rho \).
  Using the global trace condition, we construct an infinite decreasing chain of ordinals, a contradiction.

To be precise, we impose on the infinite path a condition slightly stronger than the invalidity under \( \rho \).
  To describe the condition, we need fixed-point operators \( \mu^{\alpha} x. \varphi \) and \( \nu^{\alpha} x. \varphi \) annotated by ordinals.
  The semantics of \( \mu^{\alpha} x. \varphi \) is given by
  \begin{equation*}
    \llbracket{\mu^{0} x^T. \varphi}\rrbracket(\rho) := \bot_T
    \quad\mbox{and}\quad
    \llbracket{\mu^{\alpha} x^T. \varphi}\rrbracket(\rho) := \bigsqcup_{\beta < \alpha} \llbracket{(\lambda x. \varphi)\,(\mu^{\beta} x^T. \varphi)}\rrbracket(\rho).
  \end{equation*}
  The definition of \( \nu^{\alpha} x.\varphi \) is similar (but uses the dual operations).

  During the construction of the path, we give ordinal annotations to occurrences \( \mu \) on the left side of \( \vdash \) and to occurrences of \( \nu \) on the right side.
  The annotation processes are essentially the same as that in Definition~\ref{def: trace-ho}.
  Again, the most important case is that \( \tau_i \) is the principal occurrence of (\( \mu L \)) or (\( \nu R \)).
  Consider the (\( \mu L \)) case.
  Then \( \tau_i \equiv (\mu^{\alpha} x. \varphi)\,\vec{\psi} \) (where \( \mu \) in \( \varphi \) and \( \vec{\psi} \) are also annotated).
  By the assumption of the invalidity of the sequent under \( \rho \), we have \( \llbracket(\mu^{\alpha} x. \varphi)\,\vec{\psi}\rrbracket(\rho) = \top \).
  The annotation to \( \tau_{i+1} \) is \( \varphi[\mu^{\beta} x. \varphi/x]\,\vec{\psi} \) where \( \beta \) is the minimum ordinal such that \( \llbracket\varphi[\mu^{\beta} x. \varphi]\,\vec{\psi}\rrbracket(\rho) = \top \).
  By this process, fixed-point operators annotated with the same sequence of natural numbers have the same ordinal annotation.
  In other words, it defines a function from sequences \( q \) of natural numbers appearing in the trace to ordinal numbers \( \alpha_q \).
  Furthermore one can show that \( \alpha_{q.k} < \alpha_q \) (provided that \( q.k \) appears in the trace).
  Therefore, if the path has a left \( \mu \)-trace witnessed by an infinite sequence \( p \), then \( \alpha_{p[0:1]} > \alpha_{p[0:2]} > \alpha_{p[0:3]} > \dots \) is an infinite decreasing chain of ordinals.

The above argument shows the following theorem.
A detailed proof is in
\iffull \autoref{ap: soundness}.
\else
\cite{KoriCSLfull}.
\fi
\begin{theorem}[Soundness] \label{soundness}
  If there is a cyclic proof %
  of $\Gamma \vdash \Delta$, then $\Gamma \vdash \Delta$ is valid.
\end{theorem}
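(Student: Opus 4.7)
The plan is to prove the contrapositive: a cyclic proof of an invalid sequent cannot exist. First I would verify the \emph{local soundness} of every rule in \autoref{fig: ded-rules}: if the conclusion is invalid under a valuation $\rho$, then some premise is invalid under some $\rho'$ (typically $\rho$ itself, or an extension as in ($\lambda R$) and (Mono)). This is routine for the logical, structural and identity rules; for ($\sigma L$)/($\sigma R$) it follows from the unfolding identity $\sigma x.\varphi = \varphi[\sigma x.\varphi/x]$; for (Mono) from the monotonicity of the denotation already noted in Section~\ref{subsec: def}. From local soundness, assuming the root is invalid under some $\rho_0$, I inductively select an invalid premise at each non-leaf node and pass through open leaves via $\mathcal R$, producing an infinite path $(n_i, \rho_i)_{i \geq 0}$ in which every $n_i$ is invalid under $\rho_i$.

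Next I would introduce \emph{ordinal-indexed fixed-points} $\mu^\alpha x.\varphi$ and $\nu^\alpha x.\varphi$ with the standard iterative semantics, noting that $\llbracket \mu x.\varphi \rrbracket_\rho = \top$ iff $\llbracket \mu^\alpha x.\varphi\rrbracket_\rho = \top$ for some ordinal $\alpha$ (and dually for $\nu$). I would then replay the annotation algorithm of Definition~\ref{def: trace-ho} along the path, attaching not just a label $q$ but also an ordinal $\alpha_q$ to each fixed-point operator. Whenever ($\mu L$) unfolds a left-hand $(\mu^{\alpha_q} x.\varphi)\,\vec\psi$ (which denotes $\top$ by invalidity of $n_i$ under $\rho_i$), I set $\alpha_{q.k}$ to the least $\beta$ with $\llbracket \varphi[\mu^\beta x.\varphi/x]\,\vec\psi\rrbracket_{\rho_i} = \top$; such a $\beta$ exists and satisfies $\beta < \alpha_q$ by the supremum definition of the ordinal approximation. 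The choice for ($\nu R$) is entirely dual. For every other rule, including (Mono), which branches into one premise per occurrence of $x$ as in Remark~\ref{rem:mono}, ordinals simply travel unchanged with the traced occurrence.

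Finally, the global trace condition gives on some tail of the path a left $\mu$-trace or a right $\nu$-trace, witnessed by an infinite sequence $p$ of natural numbers. By construction every finite prefix $p[0{:}n]$ is generated at some step, so the values $\alpha_{p[0:1]} > \alpha_{p[0:2]} > \alpha_{p[0:3]} > \cdots$ form an infinite strictly descending chain in the ordinals, contradicting well-foundedness; the $\nu$ case is dual. This yields the desired contradiction.

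The main obstacle I expect is establishing the invariant $\alpha_{q.k} < \alpha_q$ uniformly across all rules in the higher-order setting, particularly for (Mono), (Subst) and the higher-order applications of ($\lambda L$)/($\lambda R$). For (Mono), the valuation $\rho'$ witnessing invalidity of a premise must be chosen so that, for a suitable assignment $\vec y \mapsto \vec v$, the annotated occurrences of $\psi$ and $\chi$ retain the approximation ordinals they carried inside $\varphi[\psi/x]$ and $\varphi[\chi/x]$; only then does the ordinal decrease at the head of the traced occurrence transfer correctly across the rule. Proving this transport invariant at arbitrary higher-order type $T$, jointly with a substitution lemma for ordinal-annotated formulas, is where most of the technical work lies.
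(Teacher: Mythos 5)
Your proposal is correct and follows essentially the same route as the paper's proof: contraposition, local soundness to extract an infinite path of sequents invalid under successively chosen valuations, ordinal annotations on left $\mu$'s and right $\nu$'s chosen as minimal approximants at each unfolding, and a contradiction via an infinite descending chain $\alpha_{p[0:1]} > \alpha_{p[0:2]} > \cdots$ obtained from the global trace condition. You also correctly identify (Mono) with its multiple premises and the transport of annotations under substitution as the places where the detailed case analysis is concentrated, which is exactly where the paper's appendix proof spends its effort.
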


\section{Completeness of Infinitary Variant}\label{sec: infinitary}
  Our cyclic proof system is incomplete.
  Existence of a proof in our cyclic proof system is computably enumerable, but the valid sequents in \hfl{} are not
  {because \hfl{} includes Peano arithmetic}.

  The aim of this section is to find a complete proof system.
  Following \cite{brotherston}, we study the infinitary variant of our cyclic proof system, in which a proof is an infinite tree instead of a finite tree with cycles.
  More precisely, an \emph{infinitary proof} is a (possibly) infinite derivation tree (without open leaves) of which all infinite paths satisfy the global trace condition.
  Soundness proof of the previous section is applicable to the infinitary system as well.
  Here we discuss its completeness.

\begin{theorem} \label{completeness}
  Let \( \Gamma \vdash \Delta \) be a sequent without higher-order free variables.
  That means, every free variable of the sequent is of type \( \mathbf{N} \) or of type \( \mathbf{N}^{k} \to \mathbf{\Omega} \) for some \( k \geq 0 \).
  If $\Gamma \models \Delta$,
  then $\Gamma \vdash \Delta$ is provable in the infinitary proof system.
\end{theorem}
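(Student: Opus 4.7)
The plan is to prove the contrapositive: from the assumption that $\Gamma \vdash \Delta$ admits no infinitary proof, I construct a valuation $\rho$ with $\Gamma \not\models_\rho \Delta$, contradicting $\Gamma \models \Delta$. First I would build a canonical fair search tree $\mathcal{T}$ rooted at $\Gamma \vdash \Delta$ by a bottom-up enumeration that, at every node, eventually applies every rule to every candidate principal formula: every propositional connective is decomposed, every fixed-point is unfolded by ($\sigma L$)/($\sigma R$), every natural-number variable is case-split via (Nat), every usable instance of (Mono) is exercised, and axiom closures are checked. Because $\Gamma \vdash \Delta$ has no infinitary proof, $\mathcal{T}$ must contain an infinite, open branch $\pi = (\Gamma_i \vdash \Delta_i)_{i \geq 0}$ whose every tail fails the global trace condition: no tail has a left $\mu$-trace or a right $\nu$-trace.

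Second, I use $\pi$ to define $\rho$, exploiting the hypothesis that every free variable of $\Gamma \vdash \Delta$ has type $\mathbf{N}$ or $\mathbf{N}^k \to \mathbf{\Omega}$. For each free variable $x$ of type $\mathbf{N}$, the fair applications of (Nat) along $\pi$ eventually pin $x$ to a concrete natural number $n_x$, and I set $\rho(x) := n_x$. For each free predicate variable $P : \mathbf{N}^k \to \mathbf{\Omega}$, I set $\rho(P)(n_1, \ldots, n_k) := \top$ iff the atom $P\,\mathbf{S}^{n_1}\mathbf{Z} \cdots \mathbf{S}^{n_k}\mathbf{Z}$ occurs on the left of some $\Gamma_i$ along $\pi$. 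Consistency is automatic: if such an atom also occurred on the right, (Axiom) would close the branch, contradicting openness.

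Finally, I prove the \emph{truth lemma}: every formula occurring on the left (resp.~right) of some $\Gamma_i \vdash \Delta_i$ evaluates to $\top$ (resp.~$\bot$) under $\rho$; applied to the root this contradicts $\Gamma \models_\rho \Delta$. The propositional, equality, $\lambda$-reduction, (Nat), and (Mono) cases follow by routine structural induction combined with fairness. The fixed-point cases are the main obstacle. Using the ordinal-indexed semantics $\mu^{\alpha}$, $\nu^{\alpha}$ from the soundness argument, the absence of any left $\mu$-trace emanating from an occurrence of $(\mu x. \varphi)\,\vec{\psi}$ on the left forces every pre-trace descending from it to eventually stop growing its $\mu$-annotation---by a K\"onig-style argument on the finitely branching tree of such pre-traces---yielding a finite-ordinal witness for $\llbracket (\mu x. \varphi)\,\vec{\psi} \rrbracket_\rho = \top$; dually for right $\nu$-formulas. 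The coinductive cases (left $\nu$, right $\mu$) are handled by Knaster--Tarski: the set of left-occurring instances along $\pi$ forms a post-fixed point (resp.~the right-occurring instances form a pre-fixed point) of the associated monotone functional and is therefore contained in the greatest fixed-point (resp.~contains the least). The hard part will be threading the ordinal bookkeeping through higher-order $\beta$-reductions and (Mono) steps so that the sequence-of-naturals annotations of Definition~\ref{def: trace-ho} line up precisely with the ordinal approximations; the first-order restriction on free variables is essential for defining $\rho$ concretely but does nothing to simplify the internal higher-order combinatorics of the unfoldings one must track.
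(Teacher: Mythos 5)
Your overall strategy is the classical one---contrapositive, fair proof search, countermodel from a failed branch, truth lemma---whereas the paper argues in the opposite direction: it builds the fair tree $T_\omega$ directly, and for \emph{each} infinite path uses the validity hypothesis $\Gamma\models\Delta$ at the limit valuation $\rho_\omega$ to extract a \emph{single} witness formula ($\varphi\in\Gamma_\omega$ with $\llbracket\varphi\rrbracket_{\rho_\omega}=\bot$, or dually on the right), and then traces only that one formula, annotating its $\nu$-operators with ordinals that strictly decrease at each $\nu$-unfolding. A $\nu$-trace would give an infinite descending chain of ordinals, so by the dichotomy lemma (Lemma~\ref{lem: uniqueness}) the trace is a left $\mu$-trace, and the global trace condition follows. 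The point is that the paper never has to prove anything about \emph{all} formulas on the branch; it only has to preserve falsity of one semantic witness, which is the exact dual of the soundness argument.

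Your route has a genuine gap at its crux, the truth lemma for fixed points. For a left occurrence of $(\mu x.\varphi)\,\vec\psi$ you claim that the absence of a left $\mu$-trace forces, ``by a K\"onig-style argument,'' a \emph{finite-ordinal} witness that the formula is true. This does not follow: the tree of pre-traces descending from the occurrence is finitely branching but infinite, and even if every individual branch stops growing its $\mu$-annotation, the depths at which they stop need not be bounded (branch $n$ may grow to length $n$), so there is no finite approximant; one needs transfinite approximants and a genuinely delicate well-founded recursion to convert ``no infinite unfolding of this $\mu$'' into ``contained in $F^\alpha(\bot)$ for some ordinal $\alpha$,'' simultaneously for all nested and alternating operators and through $\beta$-reductions and (Mono). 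That passage is the entire difficulty of a Niwi\'nski--Walukiewicz-style completeness proof and is not supplied. Similarly, your Knaster--Tarski step for left $\nu$-formulas (``the set of left-occurring instances forms a post-fixed point'') is not well defined at higher types: syntactic occurrences along the branch do not directly yield an element of the semantic lattice $\llbracket T\rrbracket$ on which the functional acts. Two smaller points: the failure of $T_\omega$ to be a proof can also manifest as a finite stuck open leaf (all formulas atomic, no axiom applicable), which your argument must and does not treat separately; and you should be aware that the paper sidesteps your entire truth lemma by invoking $\Gamma\models\Delta$ afresh on every infinite path rather than only at the root.
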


We give a sketch of the proof.
A detailed proof is given in \autoref{ap: completeness}.
We can assume without loss of generality that the sequent has no free variable of type \( \mathbf{N} \).

  The construction of the infinitary proof is surprisingly straightforward.
  We start from a (finite) pre-proof consisting only of the root node, which is an open leaf, and iteratively expand each open leaf by applying rules ($\lor L/R$), ($\land L/R$), ($\lambda L/R$), and ($\sigma L/R$).
  The only restriction is that the expansion must be ``fair'': the fairness condition we impose is that (i) each open leaf will eventually be expanded, and (ii) for every path, each formula in a sequent will eventually appear in the principal position (unless the path is terminated by the axiom rule).
  This process generates a growing sequence of (finite) pre-proofs, and the infinitary proof is defined as its limit.

  Now it suffices to show the global trace condition of the above constructed candidate proof, but this is the hardest part of the proof.
  Assume an infinite path \( \pi \).
  Since \( \Gamma \models \Delta \), for each valuation \( \rho \), there exists \( \varphi \in \Gamma \) such that \( \llbracket{\varphi}\rrbracket(\rho) = \bot \) or \( \psi \in \Delta \) such that \( \llbracket{\varphi}\rrbracket(\rho) = \top \).
  Then, by a similar argument to the proof of soundness, existence of a left \( \nu \)-trace or a right \( \mu \)-trace leads to a contradiction.
  It is worth noting that the construction is ``dual'': whereas in the soundness proof we ensure \( \llbracket \tau_i \rrbracket(\rho) = \top \) for a left-trace \( (\tau_i)_{i \geq 0} \), here \( \llbracket \tau_i \rrbracket(\rho) = \bot \) is kept.
  This difference allows us to construct a trace of the intended path \( \pi \).
  By appropriately choosing \( \rho \), one can ensure that the generated trace is infinite; hence we have constructed an infinite trace that is not left \( \nu \)- nor right \( \mu \)-trace.
  The proof is completed by the following lemma, which is technical but often used in the analysis of HFL (cf.~\cite[Lemma {6 \& 7}]{bruse14}, \cite[Lemma 26, Appendix E.2]{kobayashi18} and \cite[Lemma 14]{tsukada20}):
\begin{lemma} \label{lem: uniqueness}
  Every infinite trace $(\tau_i)_{i \leq 0}$ is either a $\mu$-trace or a $\nu$-trace but not both.
\end{lemma}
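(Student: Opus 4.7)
My plan is to reduce the lemma to two facts about the annotation process: (A) the kind ($\mu$ or $\nu$) of each non-empty annotation sequence is uniquely determined by the sequence, and (B) the annotation tree of a given infinite trace contains at most one infinite branch. Together these imply the lemma, for if an infinite trace were simultaneously a $\mu$-trace and a $\nu$-trace, witnessed by infinite sequences $p$ and $p'$, then by (A) the prefix $p[0:1]$ would be associated with a $\mu$-fixed-point and $p'[0:1]$ with a $\nu$-fixed-point, so $p \neq p'$, contradicting (B).

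To establish (A), I would induct on the trace and observe that the only rule introducing new annotations is $(\sigma L/R)$: when $\tau_i \equiv \sigma_{q} x.\varphi$ is expanded, the substitution $\varphi[\sigma_{q.k} x.\varphi/x]$ creates copies all of kind $\sigma$, and by the freshness of $k$ no other occurrence in the trace ever acquires the annotation $q.k$ except as a further copy of these (propagated through later $(\sigma L/R)$, $(\lambda L/R)$, and $(\text{Mono})$ steps, each of which preserves the underlying kind). Hence one can define $\kappa(q) \in \{\mu, \nu\}$ for every $q$ with $|q| \geq 1$ as the common kind of any fixed-point annotated by $q$.

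For (B), let $T \subseteq \mathbb{N}^*$ be the set of annotations appearing somewhere in the trace; it is prefix-closed by construction. The main obstacle is that $T$ may be infinitely branching, because the same $\sigma_{q}$ can be expanded at several distinct trace positions, each producing a fresh sibling of $q$ in $T$, so K\"onig's lemma does not apply directly. To rule out two infinite branches, my plan is to study the ``head annotation'' at each step, namely the annotation of the outermost fixed-point operator that dominates the current reduction position in $\tau_i$, and to verify rule by rule that the subsequence of head annotations at $(\sigma L/R)$ steps forms an increasing chain in $T$, with other rules either leaving the head annotation unchanged or shrinking it to a prefix. One then argues that any infinite branch of $T$ must agree, in its tail, with this chain, which forces uniqueness. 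This parallels the single ``winning thread'' arguments in HFL game semantics (e.g.~Lemma~14 of \cite{tsukada20}), which I would adapt.

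Finally, for the ``at least one'' half of the lemma, I would first observe that an infinite trace must contain infinitely many $(\sigma L/R)$ applications, since otherwise, from some step onwards, the trace consists only of rules that normalize a simply-typed, fixed-point-free formula fragment and so cannot continue indefinitely. Hence $T$ is infinite; combined with the head-annotation analysis, this yields a unique infinite branch of $T$, whose prefixes are all annotated by fixed-points of a single kind $\kappa$, thus witnessing either the $\mu$- or the $\nu$-trace condition exclusively.
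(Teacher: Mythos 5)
First, a point of comparison: the paper does not actually prove Lemma~\ref{lem: uniqueness} --- it states it and defers to \cite[Lemmas 6 \& 7]{bruse14}, \cite[Lemma 26]{kobayashi18} and \cite[Lemma 14]{tsukada20}. Your decomposition --- (A) the kind of every non-empty annotation sequence is well-defined, (B) the annotation tree $T$ has at most one infinite branch, plus an existence argument --- is the right skeleton and matches the structure of those cited proofs. Part (A) is correct: each sequence $q.k$ is created exactly once, by freshness of $k$, at the unfolding of one particular binder, and every operator it ever annotates is a copy of that binder.

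The gap is the invariant you rely on for (B) and for existence. It is false that the annotations of the operators unfolded at successive $(\sigma L/R)$ steps form an increasing chain. Unfolding a head fixed point $\sigma_q$ exposes whatever was underneath it --- typically a \emph{nested} fixed point that retains its old annotation, which need not extend $q$. Concretely, take $\tau_0 \equiv \mu_\epsilon x.\,\nu_\epsilon y.\,(x \wedge y)$ on the right of $\vdash$: unfolding $\mu_\epsilon$ puts $\nu_\epsilon$ at the head; unfolding $\nu_\epsilon$ creates the sibling annotation $1$; projecting to the left conjunct exposes $\mu_0$; unfolding it creates $0.2$ and puts $\nu_\epsilon$ back at the head. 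The sequence of unfolded annotations $\epsilon, \epsilon, 0, \epsilon, 0.2, \epsilon, \dots$ is not a chain. Even the weaker invariant ``extend or drop back to a prefix'' does not suffice: a priori the head annotation could oscillate forever between incomparable siblings of some node ($0, 0.2, \dots$, reset to $\epsilon$, then $1, 1.k, \dots$, reset, \dots), in which case $T$ would be infinite with \emph{no} infinite branch --- and, as you correctly note, K\"onig's lemma is unavailable because $T$ is infinitely branching, so your ``at least one'' half inherits the same hole. Ruling out infinitely many such resets (controlling how occurrences carrying a stale annotation can return to the head) is precisely the technical core of the cited lemmas; your sketch asserts it via a monotonicity claim that fails and otherwise outsources it to the very references the paper cites. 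So the proposal is a reasonable reduction of the lemma to its known hard kernel, but it does not supply that kernel.
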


\begin{remark}
    We are not sure if the proof can be extended to sequents with higher-order free variables.
    The problem is the construction of \( \rho \).
    In the current setting, for each free variable \( f \) of type \( \mathbf{N}^{k} \to \mathbf{\Omega} \), the valuation \( \rho \) is defined for follows: for each \( k \)-tuple \( \vec{m} \) of natural numbers,
    (i) if \( f\,\vec{m} \) appears on the left side of a sequent in the path, then \( \rho(f)(\vec{m}) := \top \);
    (ii) if \( f\,\vec{m} \) appears on the right side of a sequent in the path, then \( \rho(f)(\vec{m}) := \bot \); and
    (iii) otherwise the value of \( \rho(f)(\vec{m}) \) is arbitrary.
    The point is that the values of arguments of each fully-applied occurrence of \( f \) is canonically determined.
    This construction of \( \rho \) is no longer possible in the presence of a higher-order free variable, say \( g : \mathbf{\Omega} \to \mathbf{\Omega} \).
    When \( g\,(f\,\vec{m}) \) appears on the left side, there are two assignments that make this formula true, namely \( \rho_1(g)(\bot) = \top \) with \( \rho_1(f)(\vec{m}) = \bot \) and \( \rho_2(g)(\top) = \top \) with \( \rho_2(f)(\vec{m}) = \top \).
    \qed
\end{remark}

\setcounter{section}{6}
\section{Related Work}
\subsection{Cyclic Proof Systems}

The idea of cyclic proof can at least be traced back to the work of
Sprenger and Dam~\cite{sprenger} on a cyclic proof system called $S_{glob}$,
where proofs are explicitly annotated with ordinals.
The ordinal annotations are not convenient for automated theorem proving.

Brotherston and Simpson~\cite{brotherston}
proposed a cyclic proof system called CLKID for
a first-order predicate logic with inductive definitions,
which does not require ordinal annotations.
They have proved soundness of CLKID,
and also shown that a proof system with explicit induction rules (called LKID)
can be embedded into CLKID; we have shown analogous results in
Theorem~\ref{soundness} and Proposition~\ref{prop: explicit} for \hfl{}.
Doumane~\cite{doumane} formalized a cyclic proof system for the linear-time
(propositional) \(\mu\)-calculus, which features alternating fixed-points.
Our global trace condition has been inspired by her trace condition.

Besides cyclic proof systems for ordinary first-order logics,
cyclic proof systems have also been proposed
for separation logics~\cite{brotherston11, brotherston12, tellez20},
and automated tools have been developed based on those cyclic proof systems.

\subsection{Proof Systems and Games for HFL}

HFL was originally proposed by Viswanathan and Viswanathan~\cite{viswanathan04},
and its extensions with arithmetic (such as \hfl{}) have
recently been drawing attention in the context of higher-order program
verification~\cite{kobayashi18,watanabe19}.

For pure HFL (without natural numbers), Kobayashi et al.~\cite{kobayashi17}
proposed a type-based inference system for proving the validity of HFL formulas.%
\footnote{Actually, they formalized a type system for model checking,
  which can also be used as a proof system for proving validity of \hfl{} formulas
  without natural numbers.}
It is left for future work to study the relationship between their type system and
our cyclic proof system; it would be interesting to see whether their type system
can be embedded in our cyclic proof system.

Tsukada~\cite{tsukada20} gave a game-based characterization of
\hfl. His game may be considered a special case of the infinitary
version of our proof system,
where formulas are restricted to those
whose free variables have only natural number types.

Burn et al.~\cite{hochc} proposed a refinement type system for HoCHC, which
is closely related to the \(\nu\)-only fragment of \hfl{}.
Their proof system is incomplete, and does not support fixed-point alternations.

\section{Conclusion}
We have %
proposed a cyclic proof system for \hfl,
a higher-order logic with natural numbers and alternating fixed-points,
which we expect to be useful for higher-order program verification.
Our proof system has been inspired
by previous cyclic proof systems for first-order logics~\cite{brotherston, doumane}.
We have shown the soundness of the proof system and the decidability of
the global trace condition. We have also shown
a restricted form of standard completeness for the infinitary version of our proof system.

Constructing a (semi-)automated tool based on our cyclic proof system
is left for future work. On the theoretical side, we plan to study
whether Henkin completeness and
the cut elimination property hold
for (possibly a variation of) our cyclic proof system.

\bibliography{csl21}

\begin{thebibliography}{10}

\bibitem{brotherston11}
James Brotherston, Dino Distefano, and Rasmus~Lerchedahl Petersen.
\newblock Automated cyclic entailment proofs in separation logic.
\newblock In Nikolaj Bj{\o}rner and Viorica Sofronie{-}Stokkermans, editors,
  {\em Automated Deduction - {CADE-23} - 23rd International Conference on
  Automated Deduction, Wroclaw, Poland, July 31 - August 5, 2011. Proceedings},
  volume 6803 of {\em Lecture Notes in Computer Science}, pages 131--146.
  Springer, 2011.
\newblock \href {https://doi.org/10.1007/978-3-642-22438-6_12}
  {\path{doi:10.1007/978-3-642-22438-6_12}}.

\bibitem{brotherston12}
James Brotherston, Nikos Gorogiannis, and Rasmus~Lerchedahl Petersen.
\newblock A generic cyclic theorem prover.
\newblock In Ranjit Jhala and Atsushi Igarashi, editors, {\em Programming
  Languages and Systems}, pages 350--367, Berlin, Heidelberg, 2012. Springer
  Berlin Heidelberg.

\bibitem{brotherston}
James Brotherston and Alex Simpson.
\newblock Sequent calculi for induction and infinite descent.
\newblock {\em Journal of Logic and Computation}, 21(6):1177--1216, December
  2011.

\bibitem{bruse14}
Florian Bruse.
\newblock Alternating parity {Krivine} automata.
\newblock In Erzs{\'{e}}bet Csuhaj{-}Varj{\'{u}}, Martin Dietzfelbinger, and
  Zolt{\'{a}}n {\'{E}}sik, editors, {\em Mathematical Foundations of Computer
  Science 2014 - 39th International Symposium, {MFCS} 2014, Budapest, Hungary,
  August 25-29, 2014. Proceedings, Part {I}}, volume 8634 of {\em Lecture Notes
  in Computer Science}, pages 111--122. Springer, 2014.
\newblock URL: \url{https://doi.org/10.1007/978-3-662-44522-8\_10}.

\bibitem{hochc}
Toby~Cathcart Burn, C.{-}H.~Luke Ong, and Steven~J. Ramsay.
\newblock Higher-order constrained {Horn} clauses for verification.
\newblock {\em {PACMPL}}, 2({POPL}):11:1--11:28, 2018.
\newblock \href {https://doi.org/10.1145/3158099} {\path{doi:10.1145/3158099}}.

\bibitem{cousot79}
Patrick Cousot and Radhia Cousot.
\newblock Constructive versions of {Tarski's} fixed point theorems.
\newblock {\em Pacific J. Math.}, 82(1):43--57, 1979.
\newblock URL: \url{https://projecteuclid.org:443/euclid.pjm/1102785059}.

\bibitem{doumane}
Amina Doumane.
\newblock {\em On the infinitary proof theory of logics with fixed points}.
\newblock Theses, {Universit{\'e} Sorbonne Paris Cit{\'e}}, June 2017.

\bibitem{kobayashi17}
Naoki Kobayashi, {\'E}tienne Lozes, and Florian Bruse.
\newblock On the relationship between higher-order recursion schemes and
  higher-order fixpoint logic.
\newblock {\em ACM SIGPLAN Notices}, 52(1):246--259, 2017.

\bibitem{kobayashi19}
Naoki Kobayashi, Takeshi Nishikawa, Atsushi Igarashi, and Hiroshi Unno.
\newblock Temporal verification of programs via first-order fixpoint logic.
\newblock In {\em International Static Analysis Symposium}, pages 413--436.
  Springer, 2019.

\bibitem{kobayashi18}
Naoki Kobayashi, Takeshi Tsukada, and Keiichi Watanabe.
\newblock Higher-order program verification via {HFL} model checking.
\newblock In {\em European Symposium on Programming}, pages 711--738. Springer,
  2018.

\bibitem{negation}
{\'{E}}tienne Lozes.
\newblock A type-directed negation elimination.
\newblock In Ralph Matthes and Matteo Mio, editors, {\em Proceedings Tenth
  International Workshop on Fixed Points in Computer Science, {FICS} 2015,
  Berlin, Germany, September 11-12, 2015}, volume 191 of {\em {EPTCS}}, pages
  132--142, 2015.
\newblock \href {https://doi.org/10.4204/EPTCS.191.12}
  {\path{doi:10.4204/EPTCS.191.12}}.

\bibitem{sprenger}
Christoph Sprenger and Mads Dam.
\newblock On the structure of inductive reasoning: Circular and tree-shaped
  proofs in the {$\mu$}-calculus.
\newblock In {\em Foundations of Software Science and Computation Structures},
  pages 425--440. Springer Berlin Heidelberg, 2003.

\bibitem{tellez20}
Gadi Tellez and James Brotherston.
\newblock Automatically verifying temporal properties of pointer programs with
  cyclic proof.
\newblock {\em J. Autom. Reasoning}, 64(3):555--578, 2020.
\newblock \href {https://doi.org/10.1007/s10817-019-09532-0}
  {\path{doi:10.1007/s10817-019-09532-0}}.

\bibitem{tsukada20}
Takeshi Tsukada.
\newblock On computability of logical approaches to branching-time property
  verification of programs.
\newblock In {\em Proceedings of the 35th Annual ACM/IEEE Symposium on Logic in
  Computer Science}, LICS ’20, page 886–899, New York, NY, USA, 2020.
  Association for Computing Machinery.
\newblock \href {https://doi.org/10.1145/3373718.3394766}
  {\path{doi:10.1145/3373718.3394766}}.

\bibitem{viswanathan04}
Mahesh Viswanathan and Ramesh Viswanathan.
\newblock A higher order modal fixed point logic.
\newblock In {\em International Conference on Concurrency Theory}, pages
  512--528. Springer, 2004.

\bibitem{watanabe19}
Keiichi Watanabe, Takeshi Tsukada, Hiroki Oshikawa, and Naoki Kobayashi.
\newblock Reduction from branching-time property verification of higher-order
  programs to {HFL} validity checking.
\newblock In {\em Proceedings of the 2019 ACM SIGPLAN Workshop on Partial
  Evaluation and Program Manipulation}, PEPM 2019, page 22–34, New York, NY,
  USA, 2019. Association for Computing Machinery.
\newblock \href {https://doi.org/10.1145/3294032.3294077}
  {\path{doi:10.1145/3294032.3294077}}.

\end{thebibliography}

\clearpage
\appendix
\section{Definition of relevant occurrences} \label{ap: rel-occurrence}
{
Here we give a detailed definition of the notion of \emph{relevant occurrences}
introduced after Definition~\ref{def: preproof}.
We have already defined relevant occurrences
for the rules (\(\wedge L\)), (\(\vee L\)), and  (Mono).
We give the definition for the other rules. Below, \(n_i\) refers to the
conclusion of each rule, and \(n_{i+1}\) refers to one of the premises.
In all the rules, each formula in \(\Gamma\) or \(\Delta\) in \(n_{i+1}\)
is a relevant occurrence of the corresponding formula in \(\Gamma\) or \(\Delta\)
in \(n_i\).
\begin{itemize}
\item In (Cut), \(\varphi\) in \(n_{i+1}\) is not relevant to any formula in \(n_i\).
\item In (Ctr L) and (Ctl R), both occurrences of \(\varphi\) are relevant to
  \(\varphi\) in \(n_i\).
\item In (Ex L) and (Ex R), \(\psi\) (\(\varphi\), resp.) in \(n_{i+1}\) is
  a relevant occurrence of \(\psi\) (\(\varphi\), resp.) in \(n_{i}\).
\item In (Subst), each formula \(\psi\) in \(\Gamma\) (\(\Delta\), resp.) of $n_{i+1}$ is
  relevant to \(\psi[\varphi/x]\) in \(\Gamma[\varphi/x]\) (\(\Delta[\varphi/x]\), resp.) of $n_{i}$.
\item In ($=$L), each formula \(\psi[t/x, s/y]\) in $\Gamma[t/x, s/y]$ ($\Delta[t/x, s/y]$, resp.)
   of $n_{i+1}$
  is relevant to \(\psi[s/x, t/y]\) in $\Gamma[s/x, t/y]$ ($\Delta[s/x, t/y]$, resp.) of $n_{i}$.
\item In ($\lor$R), \(\varphi\) and \(\psi\) in \(n_{i+1}\) are
  relevant to \(\varphi\lor\psi\) in \(n_i\).
\item In ($\land$R), \(\varphi\) and \(\psi\) in \(n_{i+1}\) are
  relevant to \(\varphi\land\psi\) in \(n_i\).
\item In ($\lambda$L) and ($\lambda$R), $\varphi[\psi/x] \, \vec{\psi}$ in $n_{i+1}$ is relevant to $(\lambda x. \varphi) \, \psi \, \vec{\psi}$ in $n_i$.
\item In ($\sigma$L) and ($\sigma$R), $\varphi[\sigma x. \varphi/x] \, \vec{\psi}$ in $n_{i+1}$ is relevant to $(\sigma x. \varphi) \, \vec{\psi}$ in $n_i$.
\item In (Nat), $N \, x$ in $n_{i+1}$ is not relevant to any formula in $n_i$.
\item In (P2), $s=t$ in $n_{i+1}$ is relevant to $\mathbf{S}s = \mathbf{S}t$ in $n_i$.
\end{itemize}
}

\iffull
\section{Proof of \autoref{prop: quantifiers}} \label{ap: quantifiers}

\begin{lemma} \label{ap: z leq t}
  There exists a cyclic proof of $\vdash \mathbf{Z} \leq t$
  where $\leq \ \equiv (\mu Y. \lambda n. \lambda m. (n = m) \lor Y \, (\mathbf{S}n, m))$.
\end{lemma}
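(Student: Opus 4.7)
The strategy is to introduce $N\,t$ via (Nat) and then do cyclic induction on the structure of $t$, but there is a genuine mismatch: the definition $\leq\,n\,m = (n = m) \lor \leq\,(\mathbf{S}n)\,m$ unfolds $\leq$ by incrementing its left argument, while unfolding $N\,t$ decrements $t$. Consequently, the na\"ive cyclic proof of $N\,t \vdash \mathbf{Z} \leq t$ does not close: the step case produces $N\,t' \vdash \mathbf{Z} \leq \mathbf{S}t'$, and unfolding the right $\leq$ leaves $\mathbf{S}\mathbf{Z} \leq \mathbf{S}t'$ rather than the $\mathbf{Z} \leq t'$ needed to cycle.

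I will resolve this by introducing an auxiliary shift lemma $n \leq m \vdash n \leq \mathbf{S}m$, proved by its own cyclic sub-proof, and then use it in the main proof via (Cut) and (Subst). For the shift lemma, take $n \leq m \vdash n \leq \mathbf{S}m$ as the cyclic target, apply $(\mu L)$ to unfold the left $\leq$, and split by $(\lor L)$. The $n = m$ branch closes by $(=L)$ followed by a short sequence of $(\mu R)$, $(\lor R)$ and $(=R)$ through $\mathbf{S}m = \mathbf{S}m$. The $\mathbf{S}n \leq m$ branch reduces by $(\mu R)$, $(\lor R)$ and (Wk R) to $\mathbf{S}n \leq m \vdash \mathbf{S}n \leq \mathbf{S}m$, which is exactly the conclusion of (Subst) applied with $[\mathbf{S}n/n]$ to the cyclic target; the back-edge closes the loop, and the $\mu$ in the left $\leq$ gives a left $\mu$-trace on every infinite path, propagating faithfully through (Subst).

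With the shift lemma in hand, the main proof proceeds as follows. After (Nat), take $N\,t \vdash \mathbf{Z} \leq t$ as the cyclic target, unfold $N\,t$ by $(\mu L)$ and split by $(\lor L)$. The base case $t = \mathbf{Z}$ closes by $(=L)$ followed by $(\mu R)$, $(\lor R)$ and $(=R)$. In the step case we reach $N\,t' \vdash \mathbf{Z} \leq \mathbf{S}t'$; apply (Cut) on the formula $\mathbf{Z} \leq t'$. The left premise $N\,t' \vdash \mathbf{Z} \leq t'$ closes by the back-edge to the main cyclic target (up to $\alpha$-renaming of $t'$ as $t$); the right premise, after (Wk L), is $\mathbf{Z} \leq t' \vdash \mathbf{Z} \leq \mathbf{S}t'$, which is obtained from the shift lemma by two applications of (Subst), namely $[\mathbf{Z}/n]$ and $[t'/m]$, closing via the back-edge to the shift target. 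Every infinite path eventually remains inside either the main cycle (carrying a left $\mu$-trace on $N$) or the shift cycle (carrying a left $\mu$-trace on $\leq$), so the global trace condition holds. The main obstacle is precisely this mismatch between the unfolding directions of $\leq$ and $N$; the shift lemma is the minimal auxiliary statement that bridges it, and proving it cyclically relies on (Subst) to propagate the shift through the back-edge.
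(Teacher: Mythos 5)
Your proof is correct and follows essentially the same strategy as the paper's: (Nat) plus an outer cycle on $N\,t$, a (Cut) against an auxiliary $\leq$-lemma proved by its own inner cycle closed via (Subst), and the same two-case global-trace analysis (left $\mu$ on $N$ for paths staying in the outer cycle, left $\mu$ on $\leq$ for paths absorbed into the inner one). The only difference is cosmetic: the paper's auxiliary lemma is the successor congruence $z \leq t' \vdash \mathbf{S}z \leq \mathbf{S}t'$, applied after first unfolding the right-hand $\mathbf{Z} \leq \mathbf{S}t'$ to $\mathbf{SZ} \leq \mathbf{S}t'$, whereas yours is the right shift $n \leq m \vdash n \leq \mathbf{S}m$ applied directly; both play the identical bridging role and are proved identically.
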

\begin{proof}
  The proof is as below:
\begin{center}
  \AxiomC{}
  \UnaryInfC{$\vdash \mathbf{Z} = \mathbf{Z}$}
  \RightLabel{($\mu$R, $\lor$R, Wk)}
  \UnaryInfC{$\vdash \mathbf{Z} \leq \mathbf{Z}$}
  \RightLabel{($=$L)}
  \UnaryInfC{$t=\mathbf{Z} \vdash \mathbf{Z} \leq t$}
    \AxiomC{$(\star) \ N \, t \vdash \mathbf{Z} \leq t$}
    \UnaryInfC{$N \, t' \vdash \mathbf{Z} \leq t'$}
      \AxiomC{$\Pi$}
      \RightLabel{(Cut)}
    \BinaryInfC{$N \, t' \vdash \mathbf{SZ} \leq \mathbf{S}t'$}
    \RightLabel{($\mu$R, $\lor$R, Wk)}
    \UnaryInfC{$N \, t' \vdash \mathbf{Z} \leq \mathbf{S}t'$}
    \RightLabel{(=L)}
    \UnaryInfC{$N \, t', t=\mathbf{S}t' \vdash \mathbf{Z} \leq t$}
    \RightLabel{($\exists$L, $\land$L)}
    \UnaryInfC{$\exists t'. N \, t' \land t=\mathbf{S}t' \vdash \mathbf{Z} \leq t$}
    \RightLabel{($\mu$L)}
  \BinaryInfC{$(\star) \ N \, t \vdash \mathbf{Z} \leq t$}
  \RightLabel{(Nat)}
  \UnaryInfC{$\vdash \mathbf{Z} \leq t$}
  \DisplayProof
\end{center}
\begin{center}
  $\Pi := $
  \AxiomC{}
  \RightLabel{(=R)}
  \UnaryInfC{$\vdash \mathbf{S}t' = \mathbf{S}t'$}
  \RightLabel{($\mu$R, $\lor$R)}
  \UnaryInfC{$\vdash \mathbf{S}t' \leq \mathbf{S}t'$}
  \RightLabel{(=L)}
  \UnaryInfC{$z = t' \vdash \mathbf{S}z \leq \mathbf{S}t'$}
    \AxiomC{$(\dagger) \ z \leq t' \vdash \mathbf{S}z \leq \mathbf{S}t'$}
    \RightLabel{(Subst)}
    \UnaryInfC{$\mathbf{S}z \leq t' \vdash \mathbf{SS}z \leq \mathbf{S}t'$}
    \RightLabel{(Wk R)}
    \UnaryInfC{$\mathbf{S}z \leq t' \vdash \mathbf{S}z = \mathbf{S}t', \mathbf{SS}z \leq \mathbf{S}t'$}
    \RightLabel{($\mu$R, $\lor$R)}
    \UnaryInfC{$\mathbf{S}z \leq t' \vdash \mathbf{S}z \leq \mathbf{S}t'$}
  \RightLabel{($\mu$L)}
  \BinaryInfC{$(\dagger) \ z \leq t' \vdash \mathbf{S}z \leq \mathbf{S}t'$}
  \RightLabel{(Wk, Subst)}
  \UnaryInfC{$N \, t', \mathbf{Z} \leq t' \vdash \mathbf{SZ} \leq \mathbf{S}t'$}
  \DisplayProof
\end{center}
For each infinite path $\pi$ in the above pre-proof,
if $\pi$ goes along $(\dagger)$ infinitely often then we can trace the left $\mu$ in the predicate $\leq$;
otherwise, we can trace the left $\mu$ in $N$.
Thus this pre-proof satisfies the global trace condition.
\end{proof}

\begin{proof}[Proof (\autoref{prop: quantifiers}).]
  We first replace quantifier rules by the following proofs and construct a pre-proof $\Pi'$ with no quantifier rules.

  \begin{itemize}
    \item If the rule is $(\exists L)$ and $x^N$:

      \begin{center}
      \AxiomC{$\Gamma, \varphi \vdash \Delta$}
        \AxiomC{$(\star) \ \Gamma, (\mu E. \lambda y. \varphi[y/x] \lor E \, (\mathbf{S}y)) \, x \vdash \Delta$}
        \RightLabel{(Subst) since $x \not \in FV(\Gamma, \Delta)$}
        \UnaryInfC{$\Gamma, (\mu E. \lambda y. \varphi[y/x] \lor E \, (\mathbf{S}y)) \, (\mathbf{S}x) \vdash \Delta$}
      \RightLabel{($\lor L$)}
      \BinaryInfC{$\Gamma, \varphi \lor (\mu E. \lambda y. \varphi[y/x] \lor E \, (\mathbf{S}y)) \, (\mathbf{S}x) \vdash \Delta$}
      \RightLabel{($\mu L$)}
      \UnaryInfC{$(\star) \ \Gamma, (\mu E. \lambda y. \varphi[y/x] \lor E \, (\mathbf{S}y)) \, x \vdash \Delta$}
      \RightLabel{(Subst)}
      \UnaryInfC{$\Gamma, (\mu E. \lambda y. \varphi[y/x] \lor E \, (\mathbf{S}y)) \, \mathbf{Z} \vdash \Delta$}
      \DisplayProof
      \end{center}

    \item If the rule is $(\exists L)$ and $x^T$:

      \begin{center}
      \AxiomC{$\Gamma, \varphi \vdash \Delta$}
      \RightLabel{(Subst) since $x \not \in FV(\Gamma, \Delta)$}
      \UnaryInfC{$\Gamma, \varphi[\top_T/x] \vdash \Delta$}
      \RightLabel{($\lambda$L)}
      \UnaryInfC{$\Gamma, (\lambda x. \varphi) \, \top_T \vdash \Delta$}
      \DisplayProof
      \end{center}

    \item If the rule is $(\exists R)$ and $x^N$:

      \begin{center}
        \AxiomC{$\Gamma \vdash \varphi[t/x], \Delta$}
        \RightLabel{($\mu R, \lor R$)}
        \UnaryInfC{$\Gamma \vdash E_\varphi \, t, \Delta$}
        \RightLabel{($=L$)}
        \UnaryInfC{$\Gamma, z = t \vdash E_\varphi \, z, \Delta$}
          \AxiomC{$(\star) \ \Gamma, z \leq t \vdash E_\varphi \, z, \Delta$}
          \RightLabel{(Subst)}
          \UnaryInfC{$\Gamma, \mathbf{S}z \leq t \vdash E_\varphi \, (\mathbf{S}z), \Delta$}
          \RightLabel{($\mu R, \lor R$)}
          \UnaryInfC{$\Gamma, \mathbf{S}z \leq t \vdash E_\varphi \, z, \Delta$}
          \RightLabel{($\lor L$)}
        \BinaryInfC{$\Gamma, z = t \lor \mathbf{S}z \leq t \vdash E_\varphi \, z, \Delta$}
        \RightLabel{($\mu L$)}
        \UnaryInfC{$(\star) \ \Gamma, z \leq t \vdash E_\varphi \, z, \Delta$}
        \RightLabel{(Subst) where $z$ is a free variable}
        \UnaryInfC{$\Gamma, \mathbf{Z} \leq t \vdash E_\varphi \, \mathbf{Z}, \Delta$}
        \RightLabel{(Cut) and $\vdash \mathbf{Z} \leq t$ (see \autoref{ap: z leq t})}
        \UnaryInfC{$\Gamma \vdash E_\varphi \, \mathbf{Z}, \Delta$}
        \DisplayProof
      \end{center}
      where $E_\varphi \equiv (\mu E. \lambda y. \varphi[y/x] \lor E \, (\mathbf{S}y))$
      and $\leq \ \equiv (\mu Y. \lambda n. \lambda m. n = m \lor Y \, (\mathbf{S}n, m))$.

    \item If the rule is $(\exists R)$ and $x^T$:

      \begin{center}
        \AxiomC{$\Gamma \vdash \varphi[\psi/x], \Delta$}
        \RightLabel{(Wk R)}
        \UnaryInfC{$\Gamma \vdash \varphi[\psi/x], \varphi[\top_T/x], \Delta$}
          \AxiomC{$(\star) \ \Gamma, \psi \, \vec{y} \vdash \top_T \, \vec{y}, \Delta$}
          \RightLabel{($\nu$R)}
          \UnaryInfC{$(\star) \ \Gamma, \psi \, \vec{y} \vdash \top_T \, \vec{y}, \Delta$}
          \RightLabel{(Mono)}
          \UnaryInfC{$\Gamma, \varphi[\psi/x] \vdash \varphi[\top_T/x], \Delta$}
        \RightLabel{(Cut)}
        \BinaryInfC{$\Gamma \vdash \varphi[\top_T/x], \Delta$}
        \RightLabel{($\lambda$R)}
        \UnaryInfC{$\Gamma \vdash (\lambda x. \varphi) \, \top_T, \Delta$}
        \DisplayProof
      \end{center}
    \item If the rule is $(\forall L)$ or $(\forall R)$ then the same method can be applied as $(\exists R)$ or $(\exists L)$.
  \end{itemize}
We need to check whether the pre-proof $\Pi'$ satisfies the global trace condition.
For all infinite paths $\pi$,
if there exits a tail of $\pi$ which goes along only one cycle made by the above construction, then we choose the trace of the cycle.
Otherwise, the corresponding path in $\Pi$ is infinite and this path satisfies the global trace condition.
Therefore we can choose a left $\mu$-trace or right $\nu$-trace in $\pi$.
\end{proof}

\section{Proof of \autoref{soundness}} \label{ap: soundness}
  Let \( f \) be a function on a complete lattice.
  For each ordinal \( \alpha \), we define \( f^{\alpha}(\bot) \) by:
  \begin{align*}
    f^{0}(\bot) = \bot
    \quad\mbox{and}\quad
    f^{\alpha}(\bot) = \bigsqcup_{\beta < \alpha} f(f^{\beta}(\bot)).
  \end{align*}
  Similarly \( f^{\alpha}(\top) \) is defined by:
  \begin{align*}
    f^{0}(\top) = \top
    \quad\mbox{and}\quad
    f^{\alpha}(\top) = \bigsqcap_{\beta < \alpha} f(f^{\beta}(\top)).
  \end{align*}
\begin{lemma}[Cousot-Cousot~\cite{cousot79}] \label{lem: cousot}
Let $(C, \leq)$ be a complete lattice, $f$ be a monotone function
and $\gamma$ be an ordinal number greater than the cardinality of $C$.
Then $f^{\gamma}(\bot)$ is the least fixed-point of $f$ and $f^{\gamma}(\top)$ is the greatest fixed-point of $f$.
\end{lemma}

We extend the definition of formulas by allowing $\mu$ and $\nu$ to have ordinal numbers like $\mu^\alpha, \nu^\alpha$.
The definition of $\llbracket . \rrbracket$ for $\mu^\alpha, \nu^\alpha$ is as follows.
\begin{align*}
\llbracket \mathcal{H} \vdash (\mu^\alpha x^T. \varphi) \, \vec{\psi} \rrbracket(\rho) &= ((\lambda v. \llbracket \mathcal{H}, x: T \vdash \varphi \rrbracket(\rho[x  \mapsto v]))^{\alpha} \, (\bot_n)) \, \rho(\vec{\psi}) \\
\llbracket \mathcal{H} \vdash (\nu^\alpha x^T. \varphi) \, \vec{\psi} \rrbracket(\rho) &= ((\lambda v. \llbracket \mathcal{H}, x: T \vdash \varphi \rrbracket(\rho[x  \mapsto v]))^{\alpha} \, (\top_n)) \, \rho(\vec{\psi})
\end{align*}

\begin{proof}[Proof (\autoref{soundness}).]
Assume there exists a cyclic proof $\Pi$ of $\Gamma \vdash \Delta$.
We can naturally transform this cyclic proof to an infinitary proof $\Pi$ without (Subst).
We assume towards a contradiction that the root sequent $\Gamma \vdash \Delta$ is invalid.
Let $\rho_0$ be a valuation such that $\Gamma \not \models_{\rho_0} \Delta$.

We will define
an infinite path $(n_i)_{i \geq 0} = (\Gamma_i \vdash \Delta_i)_{i \geq 0}$ starting from $\Gamma \vdash \Delta$ in $\Pi$
, sequents $(n_i')_{i \geq 0} = (\Gamma'_i \vdash \Delta'_i)_{i \geq 0}$
and valuations $(\rho_i)_{i \geq 0}$ which satisfy the following four conditions:
\begin{enumerate}
\item $\Gamma_i \vdash \Delta_i$ is obtained by forgetting ordinal numbers of $\Gamma'_i \vdash \Delta'_i$;
\item each $\mu$ in $\Gamma_i'$ (resp.~$\nu$ in $\Delta_i$) has an ordinal number;
\item $\Gamma_i' \not \models_{\rho_i} \Delta_i'$ holds;
\item the ordinal number is determined by the sequence of natural numbers generated by Definition~\ref{def: trace-ho}; and
\item for each finite sequence \( q \) and natural number \( k \), the ordinal for \( q.k \) is smaller than that for \( q \).
\end{enumerate}

The definition of $(n_i, n_i', \rho_i)_{i \geq 0}$ is as below:

\begin{itemize}
  \item $n_0 := \Gamma \vdash \Delta$.
        Because $\Gamma_0 \vdash \Delta_0$ is invalid under $\rho_0$,
        we can get $\Gamma'_0 \vdash \Delta'_0$ which is invalid under $\rho_0$ by assigning ordinal numbers to all $\mu$ in $\Gamma$ and $\nu$ in $\Delta$.
    \item If $n_i, n_i', \rho_i$ are defined, then we define $n_{i+1}, n_{i+1}', \rho_{i+1}$ as follows.
  \begin{itemize}
    \item If $n_i$ is an axiom, $\Gamma_i \vdash \Delta_i$ must be valid and then $\Gamma_i' \vdash \Delta_i'$ must be so too. This is a contradiction.
    \item If $n_i$ is an open leaf:
      $n_{i+1} := \mathcal{R}(n_i), \Gamma'_{i+1} := \Gamma'_i, \Delta'_{i+1} := \Delta'_i, \rho_{i+1} := \rho_i$.
    \item Otherwise, we devide cases by the rule which has $n_i$ as the conclusion.
    Let $p_0, p_1, ...$ be the premises sequents from left to right.
    Below, $\varphi'$ means the formula in $\Gamma_i' \cup \Delta_i'$ which corresponds to $\varphi$ in $\Gamma_i \cup \Delta_i$.
    \begin{itemize}
      \item Case (Cut):
        \begin{center}
          \AxiomC{$\Gamma \vdash \varphi, \Delta$}
          \AxiomC{$\Gamma, \varphi \vdash \Delta$}
          \RightLabel{(Cut)}
          \BinaryInfC{$\Gamma \vdash \Delta$}
          \DisplayProof
        \end{center}
        If there exists a valuation $\rho$ such that
        (1) $\rho$ is equal to $\rho_i$ in $FV(\Gamma, \Delta)$ (i.e.\ $x \in FV(\Gamma, \Delta) \Rightarrow \rho(x) = \rho_i(x)$) and
        (2) $\llbracket \varphi \rrbracket_\rho = \top$, then $n_{i+1} := p_1, \rho_{i+1} := \rho$.
        Since $\llbracket \varphi \rrbracket_\rho = \top$, there exists $\varphi'$ which holds $\llbracket \varphi' \rrbracket_{\rho} = \top$ by assigning ordinal numbers to all $\mu$ in $\varphi$.
        Then $\Gamma_{i+1}' := (\Gamma_i', \varphi'), \Delta'_{i+1} := \Delta'_i$.

        Otherwise, there exists $\rho$ such that (1) and (3) $\llbracket \varphi \rrbracket_{\rho} = \bot$.
        By using this $\rho$, we execute $n_{i+1} := p_0, \rho_{i+1} := \rho$.
        Since $\llbracket \varphi \rrbracket_{\rho} = \bot$, there exists $\varphi'$ which holds $\llbracket \varphi' \rrbracket_{\rho} = \bot$ by assigning ordinal numbers to all $\nu$ in $\varphi$.
        $\Gamma_{i+1}' := \Gamma_i', \Delta_{i+1}' := (\varphi', \Delta_i')$.

         \item Case (Mono):
           \begin{center}
            \AxiomC{$\{\Gamma, \psi \, \vec{y} \vdash \chi \, \vec{y}, \Delta\}_{1, \cdots, k}$}
            \RightLabel{$\vec{y} \cap FV(\Gamma, \psi, \chi, \Delta) = \emptyset$, (Mono)}
            \UnaryInfC{$\Gamma, \varphi[\psi/x^T] \vdash \varphi[\chi/x^T], \Delta$}
            \DisplayProof
           \end{center}
           Let $\psi_j$ and $\chi_j$ denote $\psi$ and $\chi$ in the $j$-th premise.
           Suppose $\llbracket \psi'_j \rrbracket_{\rho_i} \leq \llbracket \chi'_j \rrbracket_{\rho_i}$ for all $j \in \{1, \cdots, k\}$
           then $\llbracket (\varphi[\psi/x])' \rrbracket_{\rho_i} \leq \llbracket (\varphi[\chi/x])' \rrbracket_{\rho_i}$
           because $\llbracket \lambda x. \varphi' \rrbracket_{\rho_i}$ is a monotone function.
           This contradicts $\Gamma_i' \not \models_{\rho_i} \Delta_i'$
           thus there exists $j \in \{1, \cdots, k\}$ and
           $\vec{v} \in \llbracket T \rrbracket$ such that
           $\llbracket \psi'_j \rrbracket_{\rho_i}\vec{v} = \top$ and $\llbracket \chi'_j \rrbracket_{\rho_i}\vec{v} = \bot$.
           $n_{i+1} := p_{j-1}, \rho_{i+1} := \rho_i[\vec{y} \mapsto \vec{v}], \Gamma_{i+1}' := (\Gamma', \psi'_j \, \vec{y}), \Delta_{i+1}' := (\chi'_j \, \vec{y}, \Delta')$.

         \item Case ($\lor L$):
          \begin{center}
            \AxiomC{$\Gamma, \varphi \vdash \Delta$}
            \AxiomC{$\Gamma, \psi \vdash \Delta$}
            \RightLabel{($\lor L$)}
            \BinaryInfC{$\Gamma, \varphi \lor \psi\vdash \Delta$}
            \DisplayProof
          \end{center}
          $\llbracket \varphi' \lor \psi' \rrbracket_{\rho_i}
          = \llbracket \varphi' \rrbracket_{\rho_i} \lor \llbracket \psi' \rrbracket_{\rho_i}
          = \top$
          since $\Gamma', \varphi' \lor \psi' \not \models_{\rho_i} \Delta'$.
          Therefore, $\llbracket \varphi' \rrbracket_{\rho_i} = \top$ or $\llbracket \psi' \rrbracket_{\rho_i}$.
          If $\llbracket \varphi' \rrbracket_{\rho_i} = \top$ then $n_{i+1} := p_0, \rho_{i+1} := \rho_i, \Gamma_{i+1}' := (\Gamma', \varphi'), \Delta_{i+1}' := \Delta_i'$ else
          $n_{i+1} := p_1, \rho_{i+1} := \rho_i, \Gamma_{i+1}' := (\Gamma', \psi'), \Delta'_{i+1} := \Delta_i'$.

        \item Case ($\land R$):
          \begin{center}
            \AxiomC{$\Gamma \vdash \varphi, \Delta$}
            \AxiomC{$\Gamma \vdash \psi, \Delta$}
            \RightLabel{($\land R$)}
            \BinaryInfC{$\Gamma \vdash \varphi \land \psi, \Delta$}
            \DisplayProof
          \end{center}
          $\llbracket \varphi' \land \psi' \rrbracket_{\rho_i}
          = \llbracket \varphi' \rrbracket_{\rho_i} \land \llbracket \psi' \rrbracket_{\rho_i}
          = \bot$
          since $\Gamma' \not \models_{\rho_i} \varphi' \land \psi', \Delta'$.
          Therefore, $\llbracket \varphi' \rrbracket_{\rho_i} = \bot$ or $\llbracket \psi' \rrbracket_{\rho_i} = \bot$.
          If $\llbracket \varphi' \rrbracket_{\rho_i} = \bot$ then $n_{i+1} := p_0, \rho_{i+1} := \rho_i, \Gamma_{i+1}' := \Gamma_i', \Delta_{i+1}' := (\varphi', \Delta')$ else $n_{i+1} := p_1, \rho_{i+1} := \rho_i, \Gamma_{i+1}' := \Gamma_i', \Delta_{i+1}' := (\psi', \Delta')$.

        \item Case ($\mu L$):
          \begin{center}
            \AxiomC{$\Gamma, \varphi[\mu x. \varphi/x] \, \vec{\psi} \vdash \Delta$}
            \RightLabel{($\mu L$)}
            \UnaryInfC{$\Gamma, (\mu x. \varphi) \, \vec{\psi} \vdash \Delta$}
            \DisplayProof
          \end{center}

          Let $\alpha$ be the ordinal number which assigned to the head $\mu$ of $(\mu x. \varphi) \, \vec{\psi}$.
          If $\alpha = 0$ then
          $\llbracket (\mu^0 x^T. \varphi') \, \vec{\psi'} \rrbracket_{\rho_i}
          = \llbracket (\bot_T) \, \vec{\psi'} \rrbracket_{\rho_i} = \bot$
          and this contradicts the assumption.
          \begin{align*}
          \llbracket (\mu^\alpha x^T. \varphi') \, \vec{\psi'} \rrbracket_{\rho_i}
          &= ((\lambda  v. \llbracket \varphi' \rrbracket_{\rho_i[x \mapsto v]})^{\alpha} \, (\bot_T)) \, (\llbracket \vec{\psi'} \rrbracket_{\rho_i}) \\
          &= (\bigsqcup_{\beta < \alpha} (\lambda v. \llbracket \varphi' \rrbracket_{\rho_i[x \mapsto v]}) \, ((\lambda v. \llbracket \varphi' \rrbracket_{\rho_i[x \mapsto v]})^{\beta}  \, (\bot_T))) \, (\llbracket \vec{\psi'} \rrbracket_{\rho_i}) \\
          &= \top
          \end{align*}
          Thus there exists $\beta < \alpha$ such that
          \begin{center}
            $((\lambda v. \llbracket \varphi' \rrbracket_{\rho_i[x \mapsto v]}) \, ((\lambda v. \llbracket \varphi' \rrbracket_{\rho_i[x \mapsto v]})^{\beta}  \, (\bot_T))) \, (\llbracket \vec{\psi'} \rrbracket_{\rho_i}) = \top$.
          \end{center}

          $n_{i+1} := p_0, \rho_{i+1} := \rho_i, \Gamma'_{i+1} := (\Gamma', \varphi'[\mu^\beta x. \varphi'/x] \, \vec{\psi'}), \Delta'_{i+1} := \Delta'_i$ then
          \begin{align*}
          \llbracket (\varphi'[\mu^\beta x. \varphi'/x]) \, \vec{\psi'} \rrbracket_{\rho_{i}}
          &= ((\lambda  v. \llbracket \varphi' \rrbracket_{\rho_{i}[x \mapsto v]}) \, (\llbracket \mu^\beta x. \varphi' \rrbracket_{\rho_{i}})) \, (\llbracket \vec{\psi'} \rrbracket_{\rho_i}) \\
          &= ((\lambda  v. \llbracket \varphi' \rrbracket_{\rho_{i}[x \mapsto v]}) \, ((\lambda v. \llbracket \varphi' \rrbracket_{\rho_{i}[x \mapsto v]})^{\beta} \, (\bot_T))) \, (\llbracket \vec{\psi'} \rrbracket_{\rho_i}) \\
          &= \top
          \end{align*}

        \item Case ($\nu R$):
          \begin{center}
            \AxiomC{$\Gamma \vdash \varphi[\nu x. \varphi/x] \, \vec{\psi}, \Delta$}
            \RightLabel{($\nu R$)}
            \UnaryInfC{$\Gamma \vdash (\nu x. \varphi) \, \vec{\psi}, \Delta$}
            \DisplayProof
          \end{center}

          Let $\alpha$ be the ordinal number which assigned to the head $\nu$ of $(\nu x. \varphi) \, \vec{\psi}$.
          If $\alpha = 0$ then
          $\llbracket (\nu^0 x^T. \varphi') \, \vec{\psi'} \rrbracket_{\rho_i}
          = \llbracket (\top_T) \, \vec{\psi'} \rrbracket_{\rho_i} = \top$
          and this contradicts the assumption.
          \begin{align*}
          \llbracket (\nu^\alpha x^T. \varphi') \, \vec{\psi'} \rrbracket_{\rho_i}
          &= ((\lambda  v. \llbracket \varphi' \rrbracket_{\rho_i[x \mapsto v]})^{\alpha} \, (\bot_T)) \, (\llbracket \vec{\psi'} \rrbracket_{\rho_i}) \\
          &= (\bigsqcap_{\beta < \alpha} (\lambda v. \llbracket \varphi' \rrbracket_{\rho_i[x \mapsto v]}) \, ((\lambda v. \llbracket \varphi' \rrbracket_{\rho_i[x \mapsto v]})^{\beta}  \, (\top_T))) \, (\llbracket \vec{\psi'} \rrbracket_{\rho_i}) \\
          &= \bot
          \end{align*}
          Thus there exists $\beta < \alpha$ such that
          \begin{center}
          $((\lambda v. \llbracket \varphi' \rrbracket_{\rho_i[x \mapsto v]}) \, ((\lambda v. \llbracket \varphi' \rrbracket_{\rho_i[x \mapsto v]})^{\beta} \,  (\top_T))) \, (\llbracket \vec{\psi'} \rrbracket_{\rho_i}) = \bot$.
          \end{center}

          $n_{i+1} := p_0, \rho_{i+1} := \rho_i, \Gamma_{i+1}' := \Gamma_i', \Delta_{i+1}' := (\varphi'[\nu^\beta x. \varphi'/x] \, \vec{\psi'}, \Delta')$ then
          \begin{align*}
          \llbracket (\varphi'[\nu^\beta x. \varphi'/x]) \, \vec{\psi'} \rrbracket_{\rho_{i}}
          &= ((\lambda  v. \llbracket \varphi' \rrbracket_{\rho_{i}[x \mapsto v]}) \, (\llbracket \nu^\beta x. \varphi' \rrbracket_{\rho_{i}})) \, (\llbracket \vec{\psi'} \rrbracket_{\rho_i}) \\
          &= ((\lambda  v. \llbracket \varphi' \rrbracket_{\rho_{i}[x \mapsto v]}) \, ((\lambda v. \llbracket \varphi' \rrbracket_{\rho_{i}[x \mapsto v]})^{\beta} \, (\top_T))) \, (\llbracket \vec{\psi'} \rrbracket_{\rho_i}) \\
          &= \bot
          \end{align*}

      \item Case (Nat):
        \begin{center}
          \AxiomC{$\Gamma, N \, x^{\mathbf{N}} \vdash \Delta$}
          \RightLabel{(Nat)}
          \UnaryInfC{$\Gamma \vdash \Delta$}
          \DisplayProof
        \end{center}
        If $x \in FV(\Gamma, \Delta)$ then $\rho_{i+1} := \rho_i$ and otherwise $\rho_{i+1} := \rho_i[x \mapsto \mathbf{Z}]$.
        For each case, there is an ordinal number $\alpha$
        such that $\llbracket N^\alpha \, x \rrbracket_{\rho_{i+1}} = \top$.
        Then $n_{i+1} := p_0, \Gamma_{i+1}' := (\Gamma_i', N^\alpha \, x), \Delta_{i+1}' := \Delta_i'$.

        \item Otherwise: $n_{i+1} := p_0, \rho_{i+1} := \rho_i, \Gamma_{i+1}' := \Gamma_i', \Delta_{i+1}' := \Delta'_i$.
    \end{itemize}
  \end{itemize}
\end{itemize}

Because of the global trace condition, $(\Gamma_i \vdash \Delta_i)_{i \geq 0}$ has a trace $(\tau_i)_{i \geq k}$ which is a left $\mu$-trace or a right $\nu$-trace.
Then there is an infinite sequence $p$ of the trace by the definition of $\sigma$-trace.
For all $n \geq 1$, there is a $\sigma_{p[0:n]}$ in some $\tau_i$
and we denote by $\alpha_n$ the ordinal number assigned to this $\sigma$.

By the definiton of $(\alpha_n)_{n \geq 1}$ and $(\Gamma'_i \vdash \Delta'_i)_{i \geq 0}$,
every ordinal number assigned to $\sigma_{p[0:n]}$ is $\alpha_n$.
Thus for all $n \geq 1$, there exists $i$ such that $(\tau_i \equiv (\sigma_{p[0:n]} x. \varphi) \, \vec{\psi}) \to (\tau_{i+1} \equiv (\varphi[\sigma_{p[0:n+1]} x. \varphi/x]) \, \vec{\psi})$.
Hence $\alpha_n > \alpha_{n+1}$ holds for all $n \geq 1$ so $(\alpha_n)_{n \geq 1}$ is a decreasing sequence of ordinal numbers.
However, such a sequence does not exist so this is a contradiction.
\end{proof}
\else
\fi

\section{Proof of \autoref{completeness}} \label{ap: completeness}
\iffull
\else
  Let \( f \) be a function on a complete lattice.
  For each ordinal \( \alpha \), we define \( f^{\alpha}(\bot) \) by
  \( f^{0}(\bot) = \bot \) and \( f^{\alpha}(\bot) = \bigsqcup_{\beta < \alpha} f(f^{\beta}(\bot)) \).
  Similarly \( f^{\alpha}(\top) \) is defined by
    $f^{0}(\top) = \top$ and
    $f^{\alpha}(\top) = \bigsqcap_{\beta < \alpha} f(f^{\beta}(\top))$.
\begin{lemma}[Cousot-Cousot~\cite{cousot79}] \label{lem: cousot}
Let $(C, \leq)$ be a complete lattice, $f$ be a monotone function
and $\gamma$ be an ordinal number greater than the cardinality of $C$.
Then $f^{\gamma}(\bot)$ is the least fixed-point of $f$ and $f^{\gamma}(\top)$ is the greatest fixed-point of $f$.
\end{lemma}

We extend the definition of formulas by allowing $\mu$ and $\nu$ to have ordinal numbers like $\mu^\alpha, \nu^\alpha$.
The definition of $\llbracket . \rrbracket$ for $\mu^\alpha, \nu^\alpha$ is as follows.
\begin{align*}
\llbracket \mathcal{H} \vdash (\mu^\alpha x^T. \varphi) \, \vec{\psi} \rrbracket(\rho) &= ((\lambda v. \llbracket \mathcal{H}, x: T \vdash \varphi \rrbracket(\rho[x  \mapsto v]))^{\alpha} \, (\bot_n)) \, \rho(\vec{\psi}) \\
\llbracket \mathcal{H} \vdash (\nu^\alpha x^T. \varphi) \, \vec{\psi} \rrbracket(\rho) &= ((\lambda v. \llbracket \mathcal{H}, x: T \vdash \varphi \rrbracket(\rho[x  \mapsto v]))^{\alpha} \, (\top_n)) \, \rho(\vec{\psi})
\end{align*}
\fi

\begin{lemma}
Let $\Gamma \vdash \Delta$ be a sequent and $x^\mathbf{N} \in FV(\Gamma, \Delta)$.
If $(\Gamma[\mathbf{S}^n\mathbf{Z}/x] \vdash \Delta[\mathbf{S}^n\mathbf{Z}/x])$ is cut-free provable for all $n \in \mathbb{N}$
then $\Gamma \vdash \Delta$ is cut-free provable.
\end{lemma}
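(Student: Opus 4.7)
The plan is to co-inductively construct an infinitary cut-free derivation of \(\Gamma \vdash \Delta\) from the family \(\{\Pi_n\}_{n \in \mathbb{N}}\) by using (Nat) to introduce \(N\,x\) and then performing repeated case analysis via the \(\mu\)-unfoldings of \(N\) and of the \(\mu\)-encoded existential. First, apply (Nat) to reduce to proving \(\Gamma, N\,x \vdash \Delta\); then apply \((\mu L)\) and \((\lambda L)\) to expose the disjunction \((x = \mathbf{Z}) \lor (\exists x'.\; x = \mathbf{S}x' \land N\,x')\); and finally \((\lor L)\) to split into two branches. The zero branch \(\Gamma, x = \mathbf{Z} \vdash \Delta\) is closed by \((=L)\) followed by \(\Pi_0\). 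The successor branch, after unfolding \(\exists x'.\,\varphi\) into its defining fixed-point \((\mu Y.\,\lambda y.\,(x = \mathbf{S}y \land N\,y) \lor Y\,(\mathbf{S}y))\,\mathbf{Z}\), becomes the entry point of an infinite cascade.

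Next, build the cascade co-inductively. For each \(k \ge 0\), the sequent at stage \(k\) is \(\Gamma, (\mu Y.\ldots)\,\mathbf{S}^k\mathbf{Z} \vdash \Delta\). Apply \((\mu L)\), \((\lambda L)\), \((\lor L)\) to split into a left premise \(\Gamma, x = \mathbf{S}(\mathbf{S}^k\mathbf{Z}) \land N\,\mathbf{S}^k\mathbf{Z} \vdash \Delta\) and a right premise \(\Gamma, (\mu Y.\ldots)\,\mathbf{S}^{k+1}\mathbf{Z} \vdash \Delta\). The left premise is closed finitely: apply \((\land L)\) and \((=L)\) to reduce it to \(\Gamma[\mathbf{S}^{k+1}\mathbf{Z}/x], N\,\mathbf{S}^k\mathbf{Z} \vdash \Delta[\mathbf{S}^{k+1}\mathbf{Z}/x]\), then apply (Wk L) to \(\Pi_{k+1}\) to absorb the spurious hypothesis \(N\,\mathbf{S}^k\mathbf{Z}\). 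The right premise is precisely the stage-\((k+1)\) sequent, closing the co-recursion.

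Finally, verify the global trace condition. The derivation is cut-free by inspection of the rules used and because each \(\Pi_n\) is cut-free. The only infinite path is the one that always takes the right branch of \((\lor L)\) in the cascade; along it, the \(\mu Y\) from the \(\exists\)-encoding is the principal occurrence of \((\mu L)\) at every iteration, so its annotation strictly grows at each unfolding. Thus the path carries a left \(\mu\)-trace and the condition is satisfied. The main obstacle is making the co-inductive construction formally precise and verifying that the annotation on \(\mu Y\) is correctly propagated through the intermediate \((\lambda L)\) and \((\lor L)\) steps so that the growing sequence of annotations yields a genuine left \(\mu\)-trace in the sense of Definition~\ref{def: trace-ho}. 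A minor design choice is to unfold the \(\exists\) directly rather than invoke the admissible rule from Proposition~\ref{prop: quantifiers}, which keeps the infinite tree's structure explicit and its trace transparent.
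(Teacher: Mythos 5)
Your construction is correct and yields a valid cut-free infinitary proof, but it realizes the induction differently from the paper. The paper's proof applies (Nat) once and then builds a recursive family of subtrees $\Pi^i$ with root $N\,x,\Gamma[\mathbf{S}^i x/x]\vdash\Delta[\mathbf{S}^i x/x]$: at each stage it unfolds $N$ itself, uses the admissible $(\exists L)$ rule together with $(=L)$ and (Subst) to rename $x'$ back to $x$, and closes the base case with $\Pi_i$; the infinite spine therefore carries a left $\mu$-trace on the head $\mu$ of $N$, which is principal at every stage. You instead unfold $N$ exactly once and push the infinite recursion into the $\mu$-encoding of the existential, $(\mu Y.\lambda y.(x=\mathbf{S}y\wedge N\,y)\vee Y(\mathbf{S}y))\,\mathbf{S}^k\mathbf{Z}$, so your witnessing left $\mu$-trace lives on $\mu Y$ rather than on $N$, and the accumulating numeral sits in the argument of $\mu Y$ rather than in $\Gamma[\mathbf{S}^i x/x]$. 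What your route buys is the avoidance of (Subst) and of the admissible quantifier rules of Proposition~\ref{prop: quantifiers} (whose elimination would otherwise have to be checked to preserve cut-freeness and the trace condition); what the paper's route buys is a more uniform recursion in which every stage has literally the same shape up to the substitution $[\mathbf{S}^i x/x]$. One small correction: your claim that the cascade spine is \emph{the only} infinite path is inaccurate, because the hypothesised proofs $\Pi_n$ are infinitary and may themselves contain infinite paths entering through your left branches; as in the paper's argument, such paths have a tail lying entirely inside some $\Pi_n$ and hence satisfy the global trace condition because $\Pi_n$ is a proof, so the repair is immediate.
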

\begin{proof}
Assume $(\Gamma[\mathbf{S}^n\mathbf{Z}/x] \vdash \Delta[\mathbf{S}^n\mathbf{Z}/x])$ is cut-free provable for all $n \in \mathbb{N}$,
and let $\Pi_n$ be the proof.
We define $(\Pi^i)_{i \geq 0}$ recursively whose root node is $(N \, x, \Gamma[\mathbf{S}^ix/x] \vdash \Delta[\mathbf{S}^ix/x])$ as follows:
{\footnotesize
  \begin{center}
  $\Pi^i :=$
  \AxiomC{$\Pi_i$}
  \UnaryInfC{$\Gamma[\mathbf{S}^i\mathbf{Z}/x] \vdash \Delta[\mathbf{S}^i\mathbf{Z}/x]$}
  \RightLabel{(=L)}
  \UnaryInfC{$x=\mathbf{Z}, \Gamma[\mathbf{S}^ix/x] \vdash \Delta[\mathbf{S}^ix/x]$}
    \AxiomC{$\Pi^{i+1}$}
    \UnaryInfC{$N \, x, \Gamma[\mathbf{S}^{i+1}x/x] \vdash \Delta[\mathbf{S}^{i+1}x/x]$}
    \RightLabel{(Subst)}
    \UnaryInfC{$N \, x', \Gamma[\mathbf{S}^i\mathbf{S}x'/x] \vdash \Delta[\mathbf{S}^i\mathbf{S}x'/x]$}
    \RightLabel{($=$L)}
    \UnaryInfC{$x=\mathbf{S}x', N \, x', \Gamma[\mathbf{S}^ix/x] \vdash \Delta[\mathbf{S}^ix/x]$}
    \RightLabel{($\exists$L, $\land$L)}
    \UnaryInfC{$\exists x'. x=\mathbf{S}x' \land N \, x', \Gamma[\mathbf{S}^ix/x] \vdash \Delta[\mathbf{S}^ix/x]$}
    \RightLabel{($\mu$L)}
  \BinaryInfC{$N \, x, \Gamma[\mathbf{S}^ix/x] \vdash \Delta[\mathbf{S}^ix/x]$}
  \DisplayProof
  \end{center}
}
Then
$\Pi :=$
\AxiomC{$\Pi^0$}
\UnaryInfC{$N \, x, \Gamma \vdash \Delta$}
\RightLabel{(Nat)}
\UnaryInfC{$\Gamma \vdash \Delta$}
\DisplayProof
is a pre-proof of $\Gamma \vdash \Delta$.
For all infinite paths $\pi$ in the pre-proof $\Pi$,
if $\pi$ goes through some $\Pi_i$ then there exists left $\mu$-trace or right $\nu$-trace because $\Pi_i$ is a proof,
and otherwise, we can trace the left $\mu$ in $N$.
Therefore, $\Pi$ is a proof of $\Gamma \vdash \Delta$.
\end{proof}

\begin{corollary} \label{co: nat}
Let $\Gamma \vdash \Delta$ be a sequent and $\vec{x}$ are all natural number free variables in $\Gamma \cup \Delta$.
If $(\Gamma[\vec{n}/\vec{x}] \vdash \Delta[\vec{n}/\vec{x}])$ is cut-free provable for all $\vec{n} \in \vec{\mathbb{N}}$
then $\Gamma \vdash \Delta$ is cut-free provable.
\end{corollary}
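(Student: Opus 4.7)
The plan is to derive the corollary from the preceding lemma by induction on the length $k$ of $\vec{x}$. The lemma handles the single-variable case, where we pass from cut-free provability of every numeral instance $\Gamma[\mathbf{S}^n\mathbf{Z}/x] \vdash \Delta[\mathbf{S}^n\mathbf{Z}/x]$ to cut-free provability of $\Gamma \vdash \Delta$, so the only work is to iterate this one variable at a time.

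First I would dispose of the base case $k = 0$: then $\vec{x}$ is empty, the substitution $[\vec{n}/\vec{x}]$ is the identity, and the hypothesis is literally the conclusion. For the inductive step, suppose the statement holds for all sequents with at most $k$ natural number free variables, and let $\Gamma \vdash \Delta$ have free natural number variables $\vec{x} = x_1, \ldots, x_{k+1}$. Fix an arbitrary $n_1 \in \mathbb{N}$ and consider the sequent $\Gamma[\mathbf{S}^{n_1}\mathbf{Z}/x_1] \vdash \Delta[\mathbf{S}^{n_1}\mathbf{Z}/x_1]$. Its natural number free variables are contained in $x_2, \ldots, x_{k+1}$, and for every tuple $(n_2, \ldots, n_{k+1}) \in \mathbb{N}^k$ the hypothesis of the corollary gives a cut-free proof of $\Gamma[\mathbf{S}^{n_1}\mathbf{Z}/x_1, \ldots, \mathbf{S}^{n_{k+1}}\mathbf{Z}/x_{k+1}] \vdash \Delta[\mathbf{S}^{n_1}\mathbf{Z}/x_1, \ldots, \mathbf{S}^{n_{k+1}}\mathbf{Z}/x_{k+1}]$. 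By the induction hypothesis applied to $\Gamma[\mathbf{S}^{n_1}\mathbf{Z}/x_1] \vdash \Delta[\mathbf{S}^{n_1}\mathbf{Z}/x_1]$, this sequent is cut-free provable.

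Since $n_1$ was arbitrary, every numeral instance of $\Gamma \vdash \Delta$ in the single variable $x_1$ is cut-free provable, so the previous lemma (applied with $x := x_1$) yields a cut-free proof of $\Gamma \vdash \Delta$, completing the induction. The only mildly delicate point, and the one I would double-check, is that substitutions in distinct variables commute, so that applying $[\mathbf{S}^{n_1}\mathbf{Z}/x_1]$ first and then $[\mathbf{S}^{n_i}\mathbf{Z}/x_i]$ for $i \geq 2$ gives the same sequent as the simultaneous substitution $[\vec{n}/\vec{x}]$ named in the hypothesis; this is immediate because $x_1 \notin \mathit{FV}(\mathbf{S}^{n_i}\mathbf{Z})$ and the $x_i$ are distinct. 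No new cycles or traces are introduced beyond those already handled by the lemma, so the cut-freeness is preserved throughout.
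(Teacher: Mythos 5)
Your proposal is correct and matches the intended argument: the paper states this as an immediate corollary of the preceding lemma, leaving the proof implicit, and the implicit proof is exactly your induction on the number of natural-number free variables, applying the lemma once per variable. The point you flag about commuting substitutions of closed numerals into distinct variables is the only detail worth checking, and your justification of it is fine.
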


Thanks to this corollary, it suffices to prove completeness of valid sequents whose free variables have type \( \mathbf{N}^{k} \to \mathbf{\Omega} \) for some \( k \geq 0 \).
In other words, we can assume without loss of generality that the sequent of interest has no free variable of type \( \mathbf{N} \).

The next two definitions construct a tree $T_\omega$ of $\Gamma \vdash \Delta$ without natural number free variables.
{The first definition is needed to deal with all formulas in a fair manner.}

\begin{definition}[Schedule]
  \emph{A schedule element} is a formula of the form $\varphi \lor \psi, \varphi \land \psi, (\lambda x. \varphi) \, \psi \, \vec{\psi}, (\sigma x. \varphi) \, \vec{\psi}$.
  We call $(E_i)_{i \geq 0}$ \emph{a schedule} if $E_i$ is a schedule element for all $i$ and every schedule element appears infinitely often in $(E_i)_{i \geq 0}$.
\end{definition}
There exists a schedule and we fix one.

\begin{definition}[$T_\omega$] \label{def: tomega}
  Let $\Gamma \vdash \Delta$ be a valid sequent that does not have natural number or higher-order free variables.
  That is, the type of each free variable in $\Gamma \vdash \Delta$ is $\mathbf{N}^n \to \mathbf{\Omega}$ for some $n \in \mathbb{N}$.

  Then we will define trees $(T_i)_{i \geq 0}$ whose roots are $\Gamma \vdash \Delta$ inductively by using a schedule $(E_i)_{i \geq 0}$.
  First, $T_0$ is defined by $T_0 := \Gamma \vdash \Delta$.

  Assume $T_i$ is already defined.
  Then we define $T_{i+1}$ in $T_i$ by replacing each open leaf $\Gamma' \vdash \Delta'$ with the following tree:
  \begin{itemize}
    \item If there exists a formula $\varphi \in \Gamma' \cap \Delta'$:

      \begin{center}
        \AxiomC{}
        \RightLabel{(Axiom)}
        \UnaryInfC{$\varphi \vdash \varphi$}
        \RightLabel{(Wk)}
        \UnaryInfC{$\Gamma' \vdash \Delta'$}
        \DisplayProof
      \end{center}

    \item If $(\mathbf{S}^n\mathbf{Z}=\mathbf{S}^m\mathbf{Z}) \in \Gamma'$ for some different natural numbers $n, m$:

      \begin{center}
        \AxiomC{}
        \RightLabel{(P1)}
        \UnaryInfC{$\mathbf{Z} = \mathbf{S}^{|n-m|}\mathbf{Z} \vdash$}
        \RightLabel{(P2)}
        \UnaryInfC{$\mathbf{S}^n\mathbf{Z} = \mathbf{S}^m\mathbf{Z} \vdash$}
        \RightLabel{(Wk)}
        \UnaryInfC{$\Gamma' \vdash \Delta'$}
        \DisplayProof
      \end{center}

    \item If $(\mathbf{S}^n\mathbf{Z}=\mathbf{S}^n\mathbf{Z}) \in \Delta'$ for some $n$:

      \begin{center}
        \AxiomC{}
        \RightLabel{($=$R)}
        \UnaryInfC{$ \vdash \mathbf{S}^n\mathbf{Z} = \mathbf{S}^n\mathbf{Z}$}
        \RightLabel{(Wk)}
        \UnaryInfC{$\Gamma' \vdash \Delta'$}
        \DisplayProof
      \end{center}

    \item Otherwise:
      This replacement is performed in such a way that each formula is {chosen
      as the target of expansion
      in a fair manner, so that every} formula is expanded at some point.
      \begin{itemize}
      \item Case ($E_i \equiv \varphi \lor \psi$):
        \begin{itemize}
          \item if $E_i \in \Gamma'$:
            \begin{center}
            \AxiomC{$\Gamma', \varphi \vdash \Delta'$}
            \AxiomC{$\Gamma', \psi \vdash \Delta'$}
            \RightLabel{($\lor$L)}
            \BinaryInfC{$\Gamma', \varphi \lor \psi \vdash \Delta'$}
            \RightLabel{(Ctr)}
            \UnaryInfC{$\Gamma' \vdash \Delta'$}
            \DisplayProof
            \end{center}

          \item if $E_i \in \Delta'$:
            \begin{center}
              \AxiomC{$\Gamma'\vdash \varphi, \psi, \Delta'$}
              \RightLabel{($\lor$R)}
              \UnaryInfC{$\Gamma' \vdash \varphi \lor \psi, \Delta'$}
              \RightLabel{(Ctr)}
              \UnaryInfC{$\Gamma' \vdash \Delta'$}
              \DisplayProof
            \end{center}
        \end{itemize}
      \item Case $(E_i \equiv \varphi \land \psi)$:
        The tree is defined in the similar way to the case $E_i \equiv \varphi \lor \psi$.

      \item Case $(E_i \equiv (\lambda x. \varphi) \, \psi \, \vec{\psi})$:
        \begin{itemize}
          \item if $E_i \in \Gamma'$:
            \begin{center}
              \AxiomC{$\Gamma', \varphi[\psi/x] \, \vec{\psi} \vdash \Delta'$}
              \RightLabel{($\lambda$L)}
              \UnaryInfC{$\Gamma', (\lambda x. \varphi) \, \psi \, \vec{\psi} \vdash \Delta'$}
              \RightLabel{(Ctr)}
              \UnaryInfC{$\Gamma' \vdash \Delta'$}
              \DisplayProof
            \end{center}

          \item if $E_i \in \Delta'$:
            \begin{center}
              \AxiomC{$\Gamma' \vdash \varphi[\psi/x] \, \vec{\psi}, \Delta'$}
              \RightLabel{($\lambda$R)}
              \UnaryInfC{$\Gamma' \vdash (\lambda x. \varphi) \, \psi \, \vec{\psi}, \Delta'$}
              \RightLabel{(Ctr)}
              \UnaryInfC{$\Gamma' \vdash \Delta'$}
              \DisplayProof
            \end{center}
        \end{itemize}
      \item Case $(E_i \equiv (\sigma x. \varphi) \, \vec{\psi})$:
        \begin{itemize}
          \item if $E_i \in \Gamma'$:
            \begin{center}
              \AxiomC{$\Gamma', \varphi[\sigma x. \varphi/x] \, \vec{\psi} \vdash \Delta'$}
              \RightLabel{($\sigma$L)}
              \UnaryInfC{$\Gamma', (\sigma x. \varphi) \, \vec{\psi} \vdash \Delta'$}
              \RightLabel{(Ctr)}
              \UnaryInfC{$\Gamma' \vdash \Delta'$}
              \DisplayProof
            \end{center}

          \item if $E_i \in \Delta'$:
            \begin{center}
              \AxiomC{$\Gamma' \vdash \varphi[\sigma x. \varphi/x] \, \vec{\psi}, \Delta'$}
              \RightLabel{($\sigma$R)}
              \UnaryInfC{$\Gamma' \vdash (\sigma x. \varphi) \, \vec{\psi}, \Delta'$}
              \RightLabel{(Ctr)}
              \UnaryInfC{$\Gamma' \vdash \Delta'$}
              \DisplayProof
            \end{center}
        \end{itemize}
    \end{itemize}
  \end{itemize}
  Note that for all $i \geq 0$,
  $T_i \subseteq T_{i+1}$
  and each sequent of an open leaf in $T_{i+1}$ includes the sequent of the corresponding leaf in $T_i$.
  We define $T_\omega$ to be $\lim_{i \to \infty} T_i$.
\end{definition}

We aim to prove that $T_\omega$ is a proof of $\Gamma \vdash \Delta$.

\begin{definition}[$\Gamma_\omega \vdash_{n/\pi} \Delta_\omega, \rho_\omega$] \label{def: rhoomega}
  For all open leaves $n$ of $T_\omega$,
  we define $\Gamma_\omega \vdash_n \Delta_\omega$ as the leaf sequent.
  For all infinite paths $\pi = (\pi_i)_{i \geq 0}$ in $T_\omega$,
  we define $\Gamma_\omega \vdash_\pi \Delta_\omega$ as $\lim_{i \to \infty} (\Gamma_i \vdash \Delta_i)$ where $\Gamma_i \vdash \Delta_i$ is the sequent of $\pi_i$.

  A valuation $\rho_\omega$ of $\Gamma_\omega \vdash_{n/\pi} \Delta_\omega$ is defined as below:
  \begin{align*}
    \rho_\omega(x^\mathbf{\Omega})
      &:= \begin{cases} \top & \text{if } x \in \Gamma_\omega \\ \bot & \text{otherwise}\end{cases} \\
    \rho_\omega(f^{\mathbf{N}^n \to \mathbf{\Omega}})
      &:= \lambda \vec{x}^{\mathbf{N}^n}. \begin{cases} \top & \text{if } f \, \vec{x} \in \Gamma_\omega \\ \bot & \text{otherwise}\end{cases}
  \end{align*}
\end{definition}

\begin{lemma} \label{lem: rhoomega}
  $T_\omega$ is a proof.
\end{lemma}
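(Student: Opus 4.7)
The plan is to verify the two defining properties of an infinitary proof: every leaf of $T_\omega$ is closed, and every infinite path satisfies the global trace condition. The first is immediate from \autoref{def: tomega}: at every stage $i$, each open leaf of $T_i$ is either closed by one of the axiom sub-cases (Axiom, P1, or $=$R) or extended by an application of some rule whose target $E_i$ occurs in its sequent; by fairness of the schedule each compound formula of a persisting leaf is eventually the schedule element, so no leaf of $T_\omega$ can remain open forever. A purely atomic leaf whose sequent is valid (and every sequent along the construction is valid, by downward invertibility of the rules from the valid root) must satisfy one of the axiom sub-cases, since otherwise the dual-valuation construction below would falsify it.

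For the global trace condition I would argue by contraposition. Suppose $\pi = (\pi_i)_{i \ge 0}$ is an infinite path in $T_\omega$ no tail of which carries a left $\mu$-trace or a right $\nu$-trace, and let $\Gamma_\omega \vdash_\pi \Delta_\omega$ and $\rho_\omega$ be as in \autoref{def: rhoomega}. The goal is to show that $\rho_\omega$ invalidates the root sequent, contradicting $\Gamma \models \Delta$. Concretely, I would establish that along $\pi$ every left-hand formula has $\llbracket \cdot \rrbracket_{\rho_\omega} = \top$ and every right-hand formula has $\llbracket \cdot \rrbracket_{\rho_\omega} = \bot$. For atomic occurrences (fully applied free predicate variables and equalities between closed numerals) this follows from the definition of $\rho_\omega$ together with the axiom sub-cases of \autoref{def: tomega}, which prevent any contradictory atomic configuration from persisting along an infinite path.

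The hard case is compound formulas, which I would handle dually to the soundness proof (\autoref{soundness}). Suppose for contradiction that some $\varphi_0 \in \Gamma_i$ satisfies $\llbracket \varphi_0 \rrbracket_{\rho_\omega} = \bot$. Starting from this occurrence, I would construct a trace $(\tau_j)_{j \ge i}$ along $\pi$ maintaining the invariant $\llbracket \tau_j \rrbracket_{\rho_\omega} = \bot$, while dually annotating every occurrence of $\nu$ met on the left with an ordinal witnessing its falsity: when an annotated $\nu^\alpha x.\varphi$ is unfolded, the greatest-fixed-point characterization forces some $\beta < \alpha$ at which the unfolded formula remains false, and that $\beta$ becomes the new annotation. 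Fairness of the schedule in \autoref{def: tomega} guarantees both that this trace hits principal occurrences of logical rules infinitely often (so it is a genuine trace in the sense of \autoref{def: trace-ho}) and that it never terminates, since the invariant $\llbracket \tau_j \rrbracket_{\rho_\omega} = \bot$ is incompatible with any axiom sub-case. By \autoref{lem: uniqueness} the trace is then either a $\mu$-trace or a $\nu$-trace; the standing assumption excludes left $\mu$-traces on tails of $\pi$, so it must be a left $\nu$-trace, and the ordinals attached along the witnessing infinite sequence $p$ form a strictly descending chain of ordinals, a contradiction. A symmetric construction using right $\mu$-traces handles the case of some $\psi_0 \in \Delta_i$ with $\llbracket \psi_0 \rrbracket_{\rho_\omega} = \top$.

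I expect the main obstacle to be the bookkeeping for this dual trace construction: propagating the ordinal annotations on left $\nu$'s and right $\mu$'s coherently through every rule instance (in particular through (Mono), where a single substituted variable may correspond to many relevant occurrences as in Remark~\ref{rem:mono}), and checking that the schedule-based fairness genuinely forces the constructed trace both to be infinite and to satisfy the principal-occurrence requirement of \autoref{def: trace-ho}, so that \autoref{lem: uniqueness} applies.
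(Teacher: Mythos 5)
Your proposal matches the paper's proof in all essentials: the same valuation $\rho_\omega$, the same argument that an open leaf of $T_\omega$ would consist only of atomic formulas and hence be falsified by $\rho_\omega$ (contradicting $\Gamma \models \Delta$ via $\Gamma \subseteq \Gamma_\omega$, $\Delta \subseteq \Delta_\omega$), and the same dual trace construction maintaining $\llbracket \tau_i \rrbracket_{\rho_\omega} = \bot$ with ordinal annotations on left $\nu$'s, combined with \autoref{lem: uniqueness} and the impossibility of an infinite descending chain of ordinals. The paper merely organizes the second half directly (extract a falsified formula from validity and conclude its trace is a left $\mu$-trace) rather than by contraposition, and your worry about (Mono) is moot since the construction in \autoref{def: tomega} never uses that rule; these differences are cosmetic.
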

\begin{proof}
  We aim to show that $T_\omega$ is a pre-proof and $T_\omega$ satisfies the global trace condition.

  Assume $T_\omega$ is not a pre-proof.
  There exists an open leaf $n$ in $T_\omega$,
  and all formulas of the leaf are of the form $x^\mathbf{\Omega}$, $f \, \vec{t}$ or $\mathbf{S}^{{m}}\mathbf{Z}=\mathbf{S}^n\mathbf{Z}$.
  Then the definition of $\rho_\omega$ of $\Gamma_\omega \vdash_n \Delta_\omega$ induces
  $\llbracket \varphi \rrbracket_{\rho_\omega} = \top$ for all $\varphi \in \Gamma_\omega$
  and $\llbracket \varphi \rrbracket_{\rho_\omega} = \bot$ for all $\varphi \in \Delta_\omega$.
  This contradicts to $\Gamma \models \Delta$ because $\Gamma \subseteq \Gamma_\omega$ and $\Delta \subseteq \Delta_\omega$.
  Therefore $T_\omega$ is a pre-proof.

  We next show that $T_\omega$ satisfies the global trace condition.
  For all infinite paths $\pi$ in $T_\omega$,
  we define $\rho_\omega$ of $\Gamma_\omega \vdash_\pi \Delta_\omega$ as \autoref{def: rhoomega}.
  We have $\Gamma \models \Delta$, $\Gamma \subseteq \Gamma_\omega$ and $\Delta \subseteq \Delta_\omega$ by the assumption and the construction of $T_\omega$.
  Therefore,
  it follows that there exists
  either (1) a formula $\varphi \in \Gamma_\omega$ such that $\llbracket \varphi \rrbracket_{\rho_\omega} = \bot$ or
  (2) a formula $\varphi \in \Delta_\omega$ such that $\llbracket \varphi \rrbracket_{\rho_\omega} = \top$.
  We now give the proof only for the case (1).
  The other case can be also proved by the same method.

  Let $j$ be a number such that $\varphi \in \pi_j$.
  We will show that there exists a left $\mu$-trace $(\tau_i)_{i \geq j}$ in $\pi$ starting from $\varphi$.

  We define $(\tau_i)_{i \geq j}$ and $(\tau_i')_{i \geq j}$ which satisfy the following conditions:
  \begin{enumerate}
    \item $\tau_i \in \Gamma_i$;
    \item we can get $\tau_i$ by forgetting ordinal numbers of $\tau_i'$;
    \item each $\nu$ in $\tau_i'$ has an ordinal number;
      \iffull
        \item $\llbracket \tau_i' \rrbracket_{\rho_\omega} = \bot$ where $\llbracket . \rrbracket$ for $\nu^\alpha$ is the same as \autoref{ap: soundness};
      \else
        \item $\llbracket \tau_i' \rrbracket_{\rho_\omega} = \bot$;
      \fi
    \item each ordinal number in $\tau_{i+1}'$ is less than or equal to the corresponding ordinal number in $\tau_i'$.
  \end{enumerate}

  The definition of $(\tau_i)_{i \geq j}$ and $(\tau_i')_{i \geq j}$ as below:
  \begin{itemize}
    \item $\tau_j := \varphi$.
      Because $\llbracket \varphi \rrbracket_{\rho_\omega} = \bot$, we can get $\varphi'$ which satisfies $\llbracket \varphi' \rrbracket_{\rho_\omega} = \bot$ by assigning ordinal numbers to all $\nu$ in $\varphi$.
    \item Assume $\tau_i$ and $\tau_i'$ are defined.
      Below, $\psi'$ means the corresponding subformula of $\tau_i'$ for each subformula $\psi$ of $\tau_i$.
      \begin{itemize}
        \item If $\tau_i \equiv x$:

          Since $x \in \Gamma_i$, $\llbracket x \rrbracket_{\rho_\omega} = \top$ because of the definition of $\rho_\omega$.
          By the assumption of $\tau_i'$,
          $\llbracket (x)' \rrbracket_{\rho_\omega} = \llbracket x \rrbracket_{\rho_\omega} = \bot$.
          This is a contradiction.

        \item If $\tau_i \equiv (\mathbf{S}^n\mathbf{Z}=\mathbf{S}^m\mathbf{Z})$:

          By the assumption of $\tau_i'$,
          $\llbracket (\mathbf{S}^n\mathbf{Z}=\mathbf{S}^m\mathbf{Z})' \rrbracket_{\rho_\omega} = \llbracket \mathbf{S}^n\mathbf{Z}=\mathbf{S}^m\mathbf{Z} \rrbracket_{\rho_\omega} = \bot$.
          Then $n \neq m$ holds and $\pi$ is not infinite because of the construction of $T_\omega$.
          This contradicts the fact that $\pi$ is an infinite path.

        \item If $\tau_i \equiv f \, \vec{t}$:

          Since $f \, \vec{t} \in \Gamma_i$, $\llbracket f \, \vec{t} \rrbracket_{\rho_\omega} = \top$ because of the definition of $\rho_\omega$.
          By the assumption of $\tau_i'$,
          $\llbracket (f \, \vec{t})' \rrbracket_{\rho_\omega} = \llbracket f \, \vec{t} \rrbracket_{\rho_\omega} = \bot$.
          This is a contradiction.

        \item If $E_i \not \equiv \tau_i$ then $\tau_{i+1} := \tau_i$ and $\tau_{i+1}' := \tau_i'$.
        \item Otherwise:
          \begin{itemize}
        \item Case $\tau_i \equiv (\psi \lor \chi)$:

          Since $\llbracket \psi' \lor \chi' \rrbracket_{\rho_\omega} = \bot$,
          $\llbracket \psi' \rrbracket_{\rho_\omega} = \llbracket \chi' \rrbracket_{\rho_\omega} = \bot$.
          $\tau_{i+1} := \psi, \tau_{i+1}' := \psi'$ if $\pi$ goes to the left leaf,
          and otherwise $\tau_{i+1} := \chi, \tau_{i+1}' := \chi'$.
          In each case, $\llbracket \tau_{i+1}' \rrbracket_{\rho_\omega} = \bot$ holds.

        \item Case $\tau_i \equiv (\psi \land \chi)$:

          Since $\llbracket \psi' \land \chi' \rrbracket_{\rho_\omega} = \bot$, either $\llbracket \psi' \rrbracket_{\rho_\omega}$ or $\llbracket \chi' \rrbracket_{\rho_\omega}$ is $\bot$.

          $\tau_{i+1} := \psi, \tau_{i+1}' := \psi'$ if $\llbracket \psi' \rrbracket_{\rho_\omega} = \bot$, and otherwise $\tau_{i+1} := \chi, \tau_{i+1}' := \chi'$.

        \item Case $\tau_i \equiv (\lambda x. \psi) \, \chi \, \vec{\theta}$:

          $\tau_{i+1} := \psi[\chi/x] \, \vec{\theta}, \tau_{i+1}' := \psi'[\chi'/x] \, \vec{\theta'}$

          then $\llbracket \psi'[\chi'/x] \, \vec{\theta'} \rrbracket_{\rho_\omega} = \llbracket (\lambda x. \psi') \, \chi' \, \vec{\theta'} \rrbracket_{\rho_\omega} = \bot$.

        \item Case $\tau_i \equiv (\mu x. \psi) \, \vec{\theta}$:

          $\tau_{i+1} := \psi[\mu x. \psi/x] \, \vec{\theta}$, $\tau_{i+1}' := \psi'[\mu x. \psi'/x] \, \vec{\theta'}$

          then $\llbracket (\psi'[\mu x. \psi'/x]) \, \vec{\theta'} \rrbracket_{\rho_\omega} = \llbracket (\mu x. \psi') \, \vec{\theta'} \rrbracket_{\rho_\omega} = \bot$.

        \item Case $\tau_i \equiv (\nu x. \psi) \, \vec{\theta}$:

          Let $\alpha$ be the ordinal number assigned to the head $\nu$ of $(\nu x. \varphi) \, \vec{\psi}$.
          If $\alpha = 0$ then
          $\llbracket (\nu^0 x^T. \varphi') \, \vec{\psi'} \rrbracket_{\rho_\omega}
          = \llbracket (\top_T) \, \vec{\psi'} \rrbracket_{\rho_\omega} = \top$
          and this contradicts the assumption.
          \begin{align*}
          \llbracket (\nu^\alpha x. \psi') \, \vec{\theta'} \rrbracket_{\rho_\omega}
          &= ((\lambda  v. \llbracket \psi' \rrbracket_{\rho_\omega[x \mapsto v]})^{\alpha} \, (\top_n)) \, \llbracket \vec{\theta'} \rrbracket_{\rho_\omega} \\
          &= (\bigsqcap_{\beta < \alpha} (\lambda v. \llbracket \psi' \rrbracket_{\rho_\omega[x \mapsto v]}) \, ((\lambda v. \llbracket \psi' \rrbracket_{\rho_\omega[x \mapsto v]})^{\beta} \, (\top_n))) \, \llbracket \vec{\theta'} \rrbracket_{\rho_\omega} \\
          &= \bot
          \end{align*}

          There exists $\beta < \alpha$ such that
          $((\lambda v. \llbracket \psi' \rrbracket_{\rho_\omega[x \mapsto v]}) \, ((\lambda v. \llbracket \psi' \rrbracket_{\rho_\omega[x \mapsto v]})^{\beta} \, (\top_n))) \, \llbracket \vec{\theta'} \rrbracket_{\rho_\omega} = \bot$.
          $\tau_{i+1} := \psi[\nu x. \psi/x] \, \vec{\theta}$, $\tau_{i+1}' := \psi'[\nu^\beta x. \psi'/x] \, \vec{\theta'}$ then
          \begin{align*}
          \llbracket (\psi'[\nu^\beta x. \psi'/x]) \, \vec{\theta'} \rrbracket_{\rho_\omega}
          &= ((\lambda  v. \llbracket \psi' \rrbracket_{\rho_\omega[x \mapsto v]}) \, (\llbracket \nu^\beta x. \psi' \rrbracket_{\rho_\omega})) \, \llbracket \vec{\theta'} \rrbracket_{\rho_\omega} \\
          &= ((\lambda  v. \llbracket \psi' \rrbracket_{\rho_\omega[x \mapsto v]}) \, ((\lambda v. \llbracket \psi' \rrbracket_{\rho_\omega[x \mapsto v]})^{\beta} \, (\top_n))) \, \llbracket \vec{\theta'} \rrbracket_{\rho_\omega} \\
          &= \bot
          \end{align*}
        \end{itemize}
      \end{itemize}
  \end{itemize}
  If this $(\tau_i)_{i \geq j}$ is a $\nu$-trace,
  there is an infinite sequence $p$ of the trace by the definition of $\nu$-trace.
  For all $n \geq 1$, there is a $\nu_{p[0:n]}$ in some $\tau_i$ and we denote by $\alpha_n$ the ordinal number assigned to this $\nu$ in $\tau_i'$.
  By the definition of $(\tau_i')_{i \geq j}$,
  all ordinal numbers assigned to $\nu_{p[0:n]}$ is equal to $\alpha_n$.
  Then for all $n \geq 1$,
  there exists $i$ such that $(\tau_i \equiv (\nu_{p[0:n]} x. \psi) \, \vec{\theta}) \to (\tau_{i+1} \equiv (\psi[\nu_{p[0:n+1]} x. \psi/x]) \, \vec{\theta})$.
  Hence $\alpha_n > \alpha_{n+1}$ holds for all $n \geq 1$ so $(\alpha_n)_{n \geq 1}$ is a decreasing sequence of ordinal numbers.
  However, such a sequence does not exist, hence a contradiction.
  By \autoref{lem: uniqueness}, it follows that $(\tau_i)_{i \geq j}$ is a left $\mu$-trace of $\pi$.

  Therefore $T_\omega$ satisfies the global trace condition.
\end{proof}

\begin{proof}[Proof (\autoref{completeness}).]
Let $\vec{x}$ be all natural number free variables of $\Gamma \vdash \Delta$.
{By} \autoref{co: nat},
it suffices to show that $\Gamma[\vec{n}/\vec{x}] \vdash \Delta[\vec{n}/\vec{x}]$ is cut-free provable for all $\vec{n} \in \vec{\mathbb{N}}$.
For each $\vec{n} \in \vec{\mathbb{N}}$,
  we construct $T_\omega$ for $\Gamma[\vec{n}/\vec{x}] \vdash \Delta[\vec{n}/\vec{x}]$ as \autoref{def: tomega},
  then \autoref{lem: rhoomega} concludes that $T_\omega$ is a proof.
  Besides, $T_\omega$ is cut-free {by the construction of $T_\omega$.}
\end{proof}

\end{document}